\documentclass[11pt]{article}
\usepackage{fullpage}
\usepackage{amsthm,amsmath,amssymb}
\usepackage{xcolor}
\usepackage{algpseudocode}
\usepackage{algorithm}
\usepackage{graphicx}
\usepackage{subcaption}
\usepackage{tikz}
\newsavebox{\theorembox}
\newsavebox{\factbox}
\newsavebox{\lemmabox}
\newsavebox{\corollarybox}
\newsavebox{\propositionbox}
\newsavebox{\examplebox}
\newsavebox{\conjecturebox}
\newsavebox{\algbox}
\newsavebox{\qbox}
\newsavebox{\problembox}
\newsavebox{\definitionbox}
\newsavebox{\assumptionbox}
\newsavebox{\hypothesisbox}
\savebox{\theorembox}{\noindent\bf Theorem}
\savebox{\factbox}{\noindent\bf Fact}
\savebox{\lemmabox}{\noindent\bf Lemma}
\savebox{\corollarybox}{\noindent\bf Corollary}
\savebox{\propositionbox}{\noindent\bf Proposition}
\savebox{\examplebox}{\noindent\bf Example}
\savebox{\conjecturebox}{\noindent\bf Conjecture}
\savebox{\algbox}{\noindent\bf Algorithm}
\savebox{\qbox}{\noindent\bf Question}
\savebox{\definitionbox}{\noindent\bf Definition}
\savebox{\problembox}{\noindent\bf Problem}
\savebox{\assumptionbox}{\noindent\bf Assumption}
\savebox{\hypothesisbox}{\noindent\bf Hypothesis}
\newtheorem{theorem}{\usebox{\theorembox}}
\newtheorem{lemma}[theorem]{\usebox{\lemmabox}}

\newtheorem{proposition}[theorem]{\usebox{\propositionbox}}

\newtheorem{definition}{\usebox{\definitionbox}}
\newtheorem{fact}{\usebox{\factbox}}

\newcommand{\bx}{\mathbf{x}}

\newcommand{\bw}{\mathbf{w}}
\newcommand{\bv}{\mathbf{v}}
\newcommand{\bz}{\mathbf{z}}
\newcommand{\bb}{\mathbf{b}}
\newcommand{\ba}{\mathbf{a}}

\newcommand{\cS}{\mathcal{S}}

\newcommand{\cD}{\mathcal{D}}
\newcommand{\poly}{\text{poly}}
\newcommand{\cC}{\mathcal{C}}
\newcommand{\cX}{\mathcal{X}}
\newcommand{\cY}{\mathcal{Y}}
\newcommand{\cF}{\mathcal{F}}
\newcommand{\cR}{\mathcal{R}}
\newcommand{\cL}{\mathcal{L}}
\newcommand{\cW}{\mathcal{W}}
\newcommand{\cA}{\mathcal{A}}
\newcommand{\hcL}{\hat{\cL}}
\newcommand{\relu}{\textsc{relu}}

\newcommand{\E}{\mathbb{E}}
\newcommand{\mmcs}{\textsc{MMCS}}

\newcommand{\cB}{\mathcal{B}}

\DeclareMathOperator{\opt}{OPT}
\newcommand{\optmmcs}{\opt_{\mmcs}}

\allowdisplaybreaks

 \begin{document}
\title{The Computational Complexity of Training ReLU(s)}

\author{Pasin Manurangsi\thanks{Email: pasin@berkeley.edu. Supported by NSF under Grants No. CCF 1655215 and CCF 1815434.}\\
UC Berkeley
\and Daniel Reichman\\
Princeton University
}

\maketitle

\begin{abstract}
We consider the computational complexity of training depth-2 neural networks composed of rectified linear units (ReLUs).
We show that, even for the case of a single ReLU, finding a set of weights that minimizes the squared error (even approximately) for a given training set is NP-hard. We also show that for a simple network consisting of two ReLUs, the error minimization problem is NP-hard, even in the realizable case. We complement these hardness results by showing that, when the weights and samples belong to the unit ball, one can (agnostically) \emph{properly} and reliably learn depth-2 ReLUs with $k$ units and error at most $\epsilon$ in time $2^{(k/\epsilon)^{O(1)}}n^{O(1)}$; this extends upon a previous work of Goel et al.~\cite{goel2016reliably} which provided efficient \emph{improper} learning algorithms for ReLUs.
\end{abstract}

A \emph{rectifier} is the real function $[z]_{+} := \max(0,z)$.
A rectified linear unit (ReLU) is a function  $f(\bz):\mathbb{R}^n\rightarrow \mathbb{R}$ of the form
$f(\bz)= [\langle\bw,\bz\rangle+b]_+$ where $\bw \in \mathbb{R}^n$ and $b \in \mathbb{R}$ are fixed.
A depth-2 neural network $f$ with $k$ ReLU units is a function $f:\mathbb{R}^n\rightarrow \mathbb{R}$ defined by
$$f(\bz;\bw^1,\ldots, \bw^k,\ba,\bb)=\sum_{j=1}^k\alpha_j[\langle\bw^j,\bz\rangle+b_j]_+.$$
Here $\bz \in \mathbb{R}^n$ is the input, $\ba = (\alpha_1, \dots, \alpha_k) \in \mathbb{R}^k$ is a vector of ``coefficients'', $\bw^j=(w^j_1,\ldots, w^j_n)\in \mathbb{R}^n$ is a weight vector associated with the $j$-th unit and $b_j$ is a real parameter (``bias") of the $j$-th unit. 
By simple normalization (e.g., \cite{pan2016expressiveness}), we can assume w.l.o.g. that each $\alpha_j$ is either $+1$ or $-1$.

Networks with rectified linear units (henceforth ReLUs) have gained popularity as they yield state-of-the-art performances in applications such as speech recognition and image classification \cite{krizhevsky2012imagenet,maas2013rectifier}. Several recent works have also explored theoretical aspects of ReLUs \cite{arora2018understanding,goel2016reliably,bach2017breaking,BDL18}.

When training neural networks composed of ReLUs, a popular method is to find, given training data, a set of weights and biases for each gate minimizing the squared loss. More formally,
given a set of $m$ vectors $\bx_1, \ldots ,\bx_m \in \mathbb{R}^n$ along with $m$ real labels $y_1, \ldots, y_m \in \mathbb{R}$, our goal is to find $\bw^1,\ldots \bw^k,\bb$
which minimize the squared training error of the sample:
\begin{equation}\label{equation:relumin}
 \min_{\bw^1,\ldots, \bw^k,\bb}\sum_{i=1}^m (f(\bx_i;\bw^1,\ldots, \bw^k,\ba,\bb)-y_i)^2
\end{equation}
Note that we generally assume that the ``coefficient'' vector $\ba$ is fixed as part of the input to the training problem. (Some of our results apply also when $\ba$ is treated as unknowns. We mention this explicitly when relevant.)

We refer to the optimization problem (\ref{equation:relumin}) as the \emph{ReLU training problem}.
A set of samples $\{(\bx_i, y_i)\}_{i \in [m]}$ is said to be \emph{realizable} if there exist $\bw^1, \cdots, \bw^k, \bb$ which result in zero training error. 
Our goal in this work is to understand the computational complexity of solving the ReLU training problem and study some implications for the problem of (agnostically) \emph{learning} ReLUs.

We are not aware of any hardness results for the ReLU training problem for a single ReLU. For 2 or more ReLU, there are NP-hardness results for networks with different architectures.
In particular, in~\cite{brutzkus2017globally}, the training problem is shown to be hard for a depth-2 \emph{convolutional network} with (at least two) \emph{non overlapping} patches.
Recently,~\cite{BDL18} consider networks similar to us except that the output gate is also a ReLU, instead of a sum gate in our case (see Figure~\ref{subfig:reluofrelu}); they show that, for such networks with three ReLUs, the training problem is NP-hard even for the realizable case.
We remark that our NP-hardness results were obtained independently of those of ~\cite{BDL18} and our NP-hardness proof is different from the proof appearing in~\cite{BDL18}.  The illustrations of our network architectures and the ones considered in~\cite{brutzkus2017globally,BDL18} are presented in Figure~\ref{fig:architecture}.

To the best of our knowledge, these architectural differences render those previous results inapplicable for deriving the hardness results regarding the networks considered in this work.

\begin{figure}
\begin{subfigure}[c]{.20\textwidth}
\centering
\begin{tikzpicture}[scale=0.8]
\node[draw] (relu) at (0, 0) [circle] {\footnotesize ReLU};
\filldraw[black] (-2,-2) circle (2pt) node[below] {$x_1$};
\filldraw[black] (-1,-2) circle (2pt) node[below] {$x_2$};
\filldraw[black] (0,-2) circle (2pt) node[below] {$x_3$};
\node at (1, -2) {$\cdots$};
\filldraw[black] (2,-2) circle (2pt) node[below] {$x_n$};
\draw[->, thick] (relu) -- (0, 2);
\draw[->, thick] (-2, -2) -- (relu);
\draw[->, thick] (-1, -2) -- (relu);
\draw[->, thick] (0, -2) -- (relu);
\draw[->, thick] (2, -2) -- (relu);
\end{tikzpicture}
\subcaption{}
\label{subfig:single}
\end{subfigure}
\hfill
\begin{subfigure}[c]{.20\textwidth}
\centering
\begin{tikzpicture}[scale=0.7]
\node[draw] (pl) at (0, 2) [circle] {\Large +};
\node[draw] (relu1) at (-1.5, 0) [circle] {\footnotesize ReLU};
\node[draw] (relu2) at (1.5, 0) [circle] {\footnotesize ReLU};
\filldraw[black] (-2,-2) circle (2pt) node[below] {$x_1$};
\filldraw[black] (-1,-2) circle (2pt) node[below] {$x_2$};
\filldraw[black] (0,-2) circle (2pt) node[below] {$x_3$};
\node at (1, -2) {$\cdots$};
\filldraw[black] (2,-2) circle (2pt) node[below] {$x_n$};
\draw[->, thick] (relu1) -- (pl);
\draw[->, thick] (relu2) -- (pl);
\draw[->, thick] (pl) -- (0, 4);
\draw[->, thick] (-2, -2) -- (relu1);
\draw[->, thick] (-1, -2) -- (relu1);
\draw[->, thick] (0, -2) -- (relu1);
\draw[->, thick] (2, -2) -- (relu1);
\draw[->, thick] (-2, -2) -- (relu2);
\draw[->, thick] (-1, -2) -- (relu2);
\draw[->, thick] (0, -2) -- (relu2);
\draw[->, thick] (2, -2) -- (relu2);
\end{tikzpicture}
\subcaption{}
\label{subfig:two}
\end{subfigure}
\hfill
\begin{subfigure}[c]{.20\textwidth}
\centering
\begin{tikzpicture}[scale=0.7]
\node[draw] (pl) at (0, 2) [circle] {\footnotesize ReLU};
\node[draw] (relu1) at (-1.5, 0) [circle] {\footnotesize ReLU};
\node[draw] (relu2) at (1.5, 0) [circle] {\footnotesize ReLU};
\filldraw[black] (-2,-2) circle (2pt) node[below] {$x_1$};
\filldraw[black] (-1,-2) circle (2pt) node[below] {$x_2$};
\filldraw[black] (0,-2) circle (2pt) node[below] {$x_3$};
\node at (1, -2) {$\cdots$};
\filldraw[black] (2,-2) circle (2pt) node[below] {$x_n$};
\draw[->, thick] (relu1) -- (pl);
\draw[->, thick] (relu2) -- (pl);
\draw[->, thick] (pl) -- (0, 4);
\draw[->, thick] (-2, -2) -- (relu1);
\draw[->, thick] (-1, -2) -- (relu1);
\draw[->, thick] (0, -2) -- (relu1);
\draw[->, thick] (2, -2) -- (relu1);
\draw[->, thick] (-2, -2) -- (relu2);
\draw[->, thick] (-1, -2) -- (relu2);
\draw[->, thick] (0, -2) -- (relu2);
\draw[->, thick] (2, -2) -- (relu2);
\end{tikzpicture}
\subcaption{}
\label{subfig:reluofrelu}
\end{subfigure}
\hfill
\begin{subfigure}[c]{.25\textwidth}
\centering
\begin{tikzpicture}[scale=0.7]
\node[draw] (pl) at (0, 2) [circle] {\Large +};
\node[draw] (relu1) at (-1.5, 0) [circle] {\footnotesize ReLU};
\node[draw] (relu2) at (1.5, 0) [circle] {\footnotesize ReLU};
\filldraw[black] (-2.5,-2) circle (2pt) node[below] {$x_1$};
\node at (-1.5,-2) {$\cdots$};
\filldraw[black] (-0.5,-2) circle (2pt) node[below] {$x_{\frac{n}{2}}$};
\filldraw[black] (0.5,-2) circle (2pt) node[below] {$x_{\frac{n}{2} + 1}$};
\node at (1.5, -2) {$\cdots$};
\filldraw[black] (2.5,-2) circle (2pt) node[below] {$x_n$};
\draw[->, thick] (relu1) -- (pl);
\draw[->, thick] (relu2) -- (pl);
\draw[->, thick] (pl) -- (0, 4);
\draw[->, thick] (-2.5, -2) -- (relu1);
\draw[->, thick] (-0.5, -2) -- (relu1);
\draw[->, thick] (0.5, -2) -- (relu2);
\draw[->, thick] (2.5, -2) -- (relu2);
\end{tikzpicture}
\subcaption{}
\label{subfig:convnet}
\end{subfigure}
\caption{Diagrams of networks considered in this work and previous works. (\ref{subfig:single}) and (\ref{subfig:two}) are the depth-2 networks we consider, for a single and two ReLUs respectively. For network (\ref{subfig:single}), we show that the training problem is NP-hard (Theorem~\ref{thm:single}) and that even approximating the minimum squared error to within an almost polynomial factor is NP-hard (Theorem~\ref{thm:single-inapprox}). For network (\ref{subfig:two}), we show that the training problem is hard, even in the realizable case (Theorem~\ref{thm:two_rel}). Architectures in (\ref{subfig:reluofrelu}) and (\ref{subfig:convnet}) are considered in~\cite{BDL18} and~\cite{brutzkus2017globally} respectively; the authors show that the training problem for their respective networks is NP-hard even in the realizable case.}
\label{fig:architecture}
\end{figure}
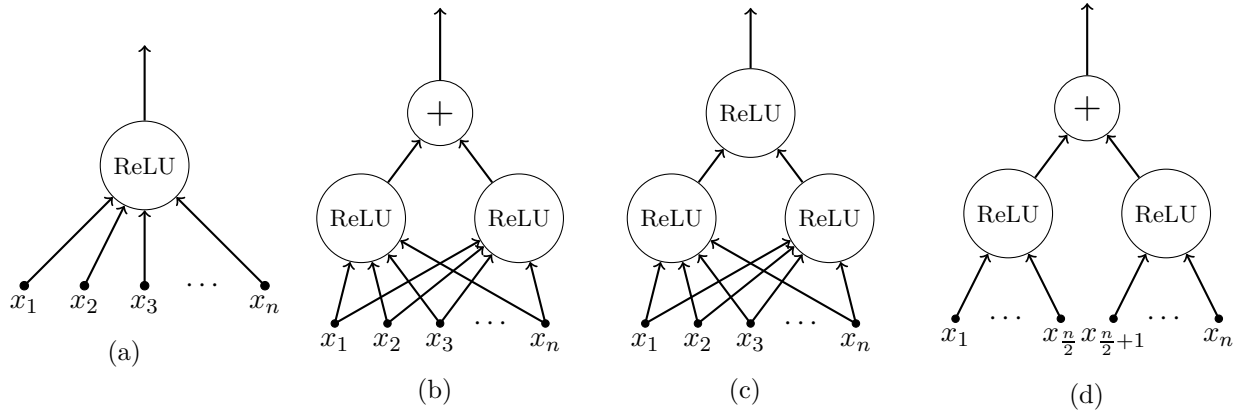

Some works
\cite{arora2018understanding,bach2017breaking} attribute implicitly or explicitly the NP-hardness of the ReLU training problem to \cite{blum1989training} which considers training a neural network with \emph{threshold units}. However, it is not clear (to us) how to derive the NP-hardness of training ReLUs from the hardness results of \cite{blum1989training}.
On the other hand, there are hardness results
with respect to \emph{improper learning} \cite{goel2016reliably,livni2014computational} (i.e., given a set of samples, return an efficiently computable function that is not necessarily a ReLU/network of ReLUs) but these rely on average case assumptions and, hence, do not establish NP-hardness of the ReLU training problem.

On the algorithmic side, Arora et al.~\cite{arora2018understanding} provide a simple and elegant algorithm that exactly solves the ReLU training problem in polynomial time assuming the dimension is an absolute constant; Arora et al.'s algorithm is for the networks we consider, and it has since been also extended to other types of networks~\cite{BDL18}.
Additionally, there have also been works on (agnostic) learning algorithms for ReLUs. Specifically, Goel et al. \cite{goel2016reliably} consider the setting where the inputs to the ReLUs
as well as the weight vectors of the units have norms at most $1$. For this setting, building on kernel methods and tools from approximation theory, they show how to improperly learn a single $n$-variable ReLU up to an additive error of $\epsilon$ in time $2^{O(1/\epsilon)} \cdot poly(n)$. Their result generalizes to depth-2 ReLUs with $k$ units with running time of $2^{O(\sqrt{k}/\epsilon)} \cdot poly(n)$. The algorithm they provide is quite general: it works for arbitrary distribution over input-output pairs, for $\epsilon$
that can be small as $1/\log n$ and also for the reliable setting. They complement their result by showing that even for a single ReLU, when $|\langle \bw, \bx\rangle|$ tends to infinity with $n$, learning $[\langle \bw, \bx\rangle]_+$ (improperly) in time
$g(\epsilon) \cdot poly(n)$ is unlikely as it will result in an efficient algorithm for the problem of learning sparse parities with noise which is believed to be intractable.

\section{Our Results}
We prove both hardness results as well as algorithmic results for training a single ReLU as well as depth-2 ReLUs with $k$ units. 
In terms of hardness, we prove NP-hardness results for the ReLU training problem showing that this problem is hard even for a single ReLU, not only to solve exactly but also to approximate (Section~\ref{sec:singlerelu}).  In Section~\ref{sec:tworelu}, we prove that, in contrast to the case of single ReLU, training 2 ReLUs is NP-hard even in the realizable case\footnote{For completeness we provide a proof in Appendix~\ref{sec:single_rel} that training a single ReLU can be done in polynomial time in the realizable case.}. We remark that this latter result also yields, as an immediate corollary, NP-hardness for training networks considered in~\cite{BDL18}. Our proof is shorter and arguably simpler than the proof appearing in~\cite{BDL18} altough
their result also applies to the case of $k>2$ whereas ours hardness result only applies when $k=2$.


On the algorithmic side, we show, in Section~\ref{sec:learning}, that depth-2 ReLUs can be \emph{properly} (agnostically) learned in time $2^{(k/\epsilon)^{O(1)}}n^{O(1)}$ provided that the inputs and weights of the units belong to the unit ball (see Section~\ref{sec:learning} for precise learning-theoretic definitions). To the best of our knowledge, only \emph{improper} learning algorithms were known before~\cite{goel2016reliably}. 
The insight here is very simple: standard generalization bounds (similar to those used in \cite{goel2016reliably}) imply that it suffices to consider only $(k/\varepsilon)^{O(1)}$ samples. We then observe that the algorithm of~\cite{arora2018understanding} runs in exponential time in the number of samples. Putting these together immediately results in the proper learning algorithm.

We additionally show that, when the coefficients $\alpha_j$'s are all positive, they can be \emph{reliably} properly learned (see Subsection~\ref{sub:reliable} for more details) in similar running time. For the reliable model, we need to also take the advantage of the biases to ensure that there are few false positives. We remark here that Goel et al.~\cite{goel2016reliably} did not allow bias in their ReLUs and hence our algorithm for the reliable model would still be improper for their setting; nevertheless, our output (ReLUs with biases) is still arguably simpler than that of \cite{goel2016reliably} (which is a ``clipped'' of a low degree polynomial). We note that, similar to \cite{goel2016reliably}, our algorithms work also for more general loss functions, as long as they are convex and $O_k(1)$-Lipschitz; we only focus on the squared loss for the simplicity of presentation.


Our lower bounds and algorithms contribute to the quest to understand how neural networks can be trained efficiently
despite NP-hardness results. Specifically, while we prove NP-hardness results for training ReLUs, our learning results (paralleling those of \cite{goel2016reliably} for improper learning) show that efficient training\footnote{A proper learning algorithm immediately yields a polynomial time training algorithm with $\varepsilon$ additive error for any constant $\varepsilon > 0$ (i.e., an additive PTAS).} (up to small additive errors) is possible when weights and inputs of bounded norms are concerned. The exponential dependency of our algorithms on $k/\epsilon$ makes them impractical, and we believe it is of interest to find faster algorithms for properly learning ReLUS.

\section{Hardness of Training a Single ReLU}\label{sec:singlerelu}
We start by showing NP-hardness of training a single ReLU:
\begin{theorem}\label{thm:single}
ReLU training problem for a neural network consisting of a single ReLU is NP-hard.
\end{theorem}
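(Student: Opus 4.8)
The plan is to reduce from an NP-hard combinatorial optimization problem via an agnostic (non-realizable) instance of single-ReLU training. Since a single ReLU can be trained exactly in polynomial time in the realizable case (Appendix~\ref{sec:single_rel}), the hardness must come from distinguishing two \emph{positive} values of the optimal loss; I would therefore reduce to the decision problem ``is $\min_{\bw,b}\sum_i([\langle\bw,\bx_i\rangle+b]_+-y_i)^2 \le t$?'' from a source problem whose ``at least $K$ constraints satisfiable'' decision version is NP-hard (for concreteness, a Max-CSP such as \lino, though a covering problem like \mmcs would serve equally well). The design exploits the single piecewise-linear nonlinearity as the only source of non-convexity: if all samples landed in the active region, the objective would reduce to ordinary least squares and be solvable in polynomial time, so all combinatorial content must be forced through the active/inactive threshold.

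Concretely, I would introduce one coordinate per variable of the source instance together with a few auxiliary coordinates, and two groups of samples. The first group consists of \emph{enforcement} samples with positive labels whose role is to pin down the behavior of $(\bw,b)$ so that, along the variable coordinates, the optimizer is driven toward a discrete set of values encoding a Boolean assignment; these samples contribute a fixed amount to the loss regardless of the assignment. The second group consists of one \emph{constraint} sample per constraint, each carrying label $y_i = 0$ and positioned so that its squared contribution $([\langle\bw,\bx_i\rangle+b]_+)^2$ vanishes exactly when the ReLU is inactive on it (pre-activation $\le 0$), which I arrange to coincide with the corresponding constraint being satisfied, and equals a fixed positive penalty when the constraint is violated (pre-activation forced to a fixed positive value). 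With appropriate scaling, the total loss becomes a constant plus the penalty times the number of violated constraints, so a threshold $t$ cleanly separates instances according to the maximum number of simultaneously satisfiable constraints.

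The main obstacle, and the technical heart of the argument, is the rounding/structure step: because $(\bw,b)$ ranges over all of $\mathbb{R}^{n+1}$ and the objective is a non-convex piecewise-quadratic function, I must rule out ``fractional'' weight vectors that exploit the linear regime to undercut the best integral (assignment-based) solution. I would prove a structural lemma showing that any $(\bw,b)$ can be transformed, without increasing the loss, into one of canonical assignment-encoding form, so that $\min_{\bw,b}$ of the loss equals the constant plus the penalty times the minimum number of violated constraints over Boolean assignments. This requires choosing the enforcement samples, the auxiliary coordinates, and the label and penalty scales so that every local trade-off between the two sample groups is dominated by the enforcement term; verifying this domination (and checking that the constructed instance is genuinely non-realizable, consistent with the realizable case being easy) is where the calculation concentrates. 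Given the structural lemma, NP-hardness of the threshold decision problem follows immediately from the hardness of the source problem.
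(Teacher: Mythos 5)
Your high-level framing is sound (the instance must be non-realizable, and one reduces from an NP-hard combinatorial problem via a threshold on the optimal loss), but the lemma you place at the ``technical heart'' of your argument is false as stated, and the failure is forced by your design choice, not by insufficient care. You put the per-violation penalty on label-$0$ samples whose pre-activation is ``forced to a fixed positive value'' when the constraint is violated. Nothing forces a pre-activation: it is a continuous function of $(\bw, b)$, and on the \emph{active} side of the kink the squared loss is smooth. So at any canonical Boolean solution the optimizer can shave a violated constraint's penalty at \emph{first} order in a perturbation $\delta$, while enforcement samples that pin weights to discrete values with equal loss can only charge \emph{second} order ($\approx \delta^2$), since their loss is locally quadratic around the encoded values. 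Hence fractional solutions strictly beat every canonical one, no loss-non-increasing transformation to canonical form exists, and the identity ``$\min$ loss $=$ constant $+$ penalty $\times$ (min violations)'' is simply wrong. A one-dimensional example already shows this: with enforcement samples $[w]_+ = 1$, $[-w]_+ = 1$ and two conflicting constraint samples $[w]_+ = 0$, $[-w]_+ = 0$, both Boolean points $w \in \{\pm 1\}$ have loss $2$, but $w = 1/2$ has loss $3/2$. The scheme is salvageable only in approximate form: scale the constraint samples by $s$ so that the total fractional advantage (of order $s^4$ per constraint) stays below the penalty quantum $s^2$, which requires $s \lesssim 1/\sqrt{M}$; replicate enforcement samples for any auxiliary coordinate shared by many constraints (otherwise moving that one coordinate deactivates all violated constraints at once at constant cost); and replace the transformation lemma by a rounding/decoding argument with the threshold placed strictly between the two resulting loss intervals. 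None of this is routine, and it is exactly the content you deferred.

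The paper's reduction (from Set Cover) sidesteps this difficulty by the opposite design choice: the penalty is incurred on the \emph{inactive} side of the ReLU. Selecting a set $S_j$ corresponds to driving $w_{S_j}$ below $-w_\epsilon$, whereupon the sample in \eqref{equation:second} has ReLU output exactly $0$ against label $\epsilon$, so its loss is locked at exactly $\epsilon^2$ and cannot be continuously shaved; reducing it requires crossing the activation boundary, i.e., a genuinely combinatorial change (de-selecting the set), which then violates a covering constraint \eqref{equation:first} at a much larger scale. Consequently the paper needs no canonical-form lemma at all: in the NO case it directly decodes a set cover $\{S_j : w_{S_j} < -w_\epsilon\}$ of size at most $k$ from any solution of loss at most $\epsilon^2 k$, using a scale hierarchy ($w_1 \geq 0.9$, $0 \leq w_\epsilon \leq 0.2/m$) enforced in part by the $k+1$ copies of \eqref{equation:fourth}. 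If you want to rescue your plan with minimal change, flip your constraint gadgets to this penalty-when-inactive form; otherwise you must weaken your structural lemma to the approximate statement above and carry out the scaling analysis in full.
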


For the simplicity of exposition, we will assume in all our hardness proofs (in this section and Section~\ref{sec:tworelu}) that the biases are equal to zero. In Appendix~\ref{app:bias}, we explain how our proofs can be easily extended to handle non-zero biases.

\begin{proof}
We reduce the set cover problem to the training ReLU problem. Recall that, in the set cover problem, we
are given a set $U=\{1,\ldots, N\}$ along with a family $\mathcal{S}=
\{S_1,\ldots,S_M\}$ of $M$ subsets of $U$. Our goal is
to determine if one can choose $k$ subsets from $\mathcal{S}$ whose union equals $\mathcal{S}$.
Set cover is well known to be NP-hard.


We consider a ReLU with $n = M+2$ variables. For each $S_i \in \mathcal{S}$, we have a variable $w_{S_i}$. We also have two dummy variables $w_1$ and $w_{\epsilon}$.
Let $\alpha_1 = 1$ and $\epsilon=0.01/m^2$.

We introduce the following training points. First, for each $i \in U$, add an $(M + 2)$-dimensional vector having $1$ for the coordinate corresponding to the dummy variable $w_1$, $1$ in all coordinates that correspond to a subset in $S$ containing $i$ and $0$ to all other coordinates. We label this vector by $0$. This labeled data point corresponds to the constraint
\begin{equation}\label{equation:first}
[w_1 + \sum_{i\in S_j} w_{S_j}]_+=0.
\end{equation}
Second, for every $j \in [M]$, add an $(M + 2)$-dimensional vector having $1$ in the $S_j$-th location, $1$ in the coordinate corresponding to $w_{\epsilon}$
and $0$ for all other coordinates. We label it by $\epsilon$. This corresponds to
\begin{equation}\label{equation:second}
[w_{\epsilon} + w_{S_j}]_+=\epsilon.
\end{equation}
We then add a vector having $1$ in the coordinate corresponding to $w_1$ and $0$ elsewhere. We label these vectors by $1$.
This corresponds to
\begin{equation}\label{equation:third}
[w_1]_+=1.
\end{equation}
We also add $(k + 1)$ vectors having $1$ in the coordinate corresponding to $w_{\epsilon}$ and $0$ elsewhere. We label these vectors by $\epsilon$.
These vectors correspond to $(k + 1)$ copies of the constraint
\begin{equation}\label{equation:fourth}
[w_{\epsilon}]_+=\epsilon.
\end{equation}
Finally, we set the target error to be $\epsilon^2k$ where $k$ is the target value in the set cover instance. 
Clearly, this reduction runs in polynomial time.

We now prove the correctness of this reduction.

(YES Case) Assume that there is a set cover of size $k$ consisting of the subsets $S_{j_1}, \ldots, S_{j_k}$ in $\mathcal{S}$. Assigning
$w_{S_{j_1}}=w_{S_{j_2}}=\ldots=w_{S_{j_k}}=-1, w_1=1$, $w_{\epsilon}=\epsilon$ and $0$ to all other variables results in an error of $\epsilon^2\cdot k$. This is because exactly
$k$ of the constraints from (\ref{equation:second}) are violated and each violated constraint contributes $\epsilon^2$ to the squared error. All other constraints are satisfied.

(NO Case) Suppose contrapositively that there is a weight vector $\bw$ that results in an error of at most $\epsilon^2 k$. First, observe that $w_1 \geq 0.9$; otherwise, the squared error from~\eqref{equation:third} is more than $(0.1)^2 \geq \epsilon^2 k$. Observe also that $w_\epsilon \leq 0.2/m$; otherwise, the squared error from~\eqref{equation:fourth} must be more than $(0.2/m - \epsilon)^2 \geq (0.1/m)^2 > \epsilon^2 k$. Moreover, notice that $w_\epsilon$ must be non-negative, since otherwise the $(k + 1)$ copies of~\eqref{equation:fourth} must incur total error of $(k + 1) \epsilon^2 > \epsilon^2 k$.

Our main observation is that the family $\cS_{< -w_\epsilon} = \{S_j: w_{S_j}<-w_\epsilon\}$ is a set cover. The reason is as follows: if there is an element $i \in U$ that is not covered by $\cS_{< -w_\epsilon}$, then $\sum_{i \in S_j} w_{S_j} \geq -w_\epsilon \cdot m \geq -0.2$, which means that the corresponding constraint (\ref{equation:first}) for $i$ will incur already a squared error of at least $(0.7)^2 > \epsilon^2k$ (recall that $k$ is no larger than $m$). Thus, the observation follows.

The last step of the proof is to show that the family $\cS_{<-w_{\varepsilon}}$ contains at most $k$ subsets. To see that this is the case, observe that, for every $S_j \in \cS_{<- w_\varepsilon}$, we have $[w_\epsilon + w_{S_j}]_+ = 0$, meaning that the corresponding constraint~\eqref{equation:second} incurs a squared error of $\epsilon^2$. Since the total squared error is at most $\epsilon^2 k$, we can immediately concludes that at most $k$ subsets belong to $\cS_{< -w_\varepsilon}$.

Thus, $\cS_{< -w_\varepsilon}$ is a set cover with at most $k$ subsets, which completes the NO case of the proof.
\end{proof}

We remark that the above proof (and also that of Theorem~\ref{thm:single-inapprox} below) also works for the case where $\alpha_1$ is treated as an unknown. This is because, if $\alpha_1 = -1$, then the error incurred in~\eqref{equation:third} (resp. in~\eqref{eq:output-wire} below) already exceeds the target error. Thus, it must be that $\alpha_1 = +1$.

\subsection{Hardness of Approximating Minimum Training Error for a Single ReLU}

The reduction above coupled with the fact that set cover is hard to approximate within a factor $O(\log |U|)$~\cite{Feige98} immediately implies that the problem of approximating the minimum training error to within a factor of $O(\log(nm))$ is also hard. In this subsection, we will substantially improve this inapproximability ratio
to an almost polynomial (i.e. $(nm)^{1/\poly \log \log (nm)}$) factor:

\begin{theorem} \label{thm:single-inapprox}
Given an instance of the single ReLU training problem, it is NP-hard to approximate the minimum squared error to within a factor of $(nm)^{1/(\log \log (nm))^{O(1)}}$. 
\end{theorem}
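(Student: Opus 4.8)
The plan is to establish hardness of approximation by reducing from a problem whose inapproximability is known to be almost polynomial, rather than from plain Set Cover (which only gives a logarithmic factor, as the remark preceding the theorem already notes). A natural candidate is the \textsc{Min-Rep} problem (equivalently, \emph{Label-Cover}-type min problems), which is NP-hard to approximate to within a factor of $(nm)^{1/(\log\log(nm))^{O(1)}}$; indeed the macro \optmin\ defined in the preamble strongly suggests the authors route the argument through \textsc{Min-Rep}. First I would recall the \textsc{Min-Rep} instance: a bipartite-style structure with groups of vertices on two sides, super-edges between groups, and the goal of selecting a minimum number of vertices (``representatives'') so that every super-edge is ``covered'' by a consistent pair of chosen endpoints. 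The optimum $\optmin$ is what we must approximately preserve.

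The core of the reduction is to encode the \textsc{Min-Rep} selection combinatorics into training points for a single ReLU, mimicking the structure of the Set-Cover reduction in Theorem~\ref{thm:single} but with the richer consistency constraints of \textsc{Min-Rep}. Concretely I would introduce one weight variable $w_v$ per vertex $v$ (playing the role of ``select $v$'' when $w_v$ is pushed to $-1$), together with the dummy variables $w_1$ and $w_\epsilon$ used to pin down the scale (so that $[w_1]_+ = 1$ and $[w_\epsilon]_+ = \epsilon$ remain enforced exactly as in equations~\eqref{equation:third} and~\eqref{equation:fourth}). For each super-edge I would add a ``coverage'' training point labeled $0$, analogous to~\eqref{equation:first}, whose associated constraint $[w_1 + \sum_{v \in e} w_v]_+ = 0$ forces at least one consistent representative pair to be selected on that super-edge; and for each vertex I would add a ``cost'' point labeled $\epsilon$, analogous to~\eqref{equation:second}, so that each selected vertex contributes exactly $\epsilon^2$ to the squared error. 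The upshot is that the minimum squared training error is $\Theta(\epsilon^2 \cdot \optmin)$, so an $\alpha$-approximation to the training error yields an $O(\alpha)$-approximation to \textsc{Min-Rep}, transferring the hardness factor up to the loss of low-order factors absorbed into the exponent $(\log\log(nm))^{O(1)}$.

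The step I expect to be the main obstacle is enforcing the \emph{consistency} requirement of \textsc{Min-Rep} faithfully through the single-ReLU gadget. In Set Cover a single element is covered by \emph{any} chosen set containing it, whereas in \textsc{Min-Rep} a super-edge is satisfied only by a \emph{matched pair} of vertices across the two sides; a lone vertex on one side does not count. Capturing this ``pairing'' inside one coverage constraint is delicate, because the ReLU sum $[w_1 + \sum_{v \in e} w_v]_+$ does not natively distinguish between one selected vertex contributing $-1$ and two consistent ones contributing $-2$. I would handle this by carefully calibrating $w_1$, the multiplicities, and the $\epsilon$-scales so that a single selected endpoint leaves the coverage constraint with residual value roughly $w_1 - 1$ (still costing more than the target), while a consistent pair drives the argument of the rectifier to be nonpositive; auxiliary points and appropriately weighted copies would be introduced to rule out ``cheating'' configurations, exactly in the spirit of the $w_\epsilon \le 0.2/m$ and $w_\epsilon \ge 0$ sandwich arguments used in the NO case of Theorem~\ref{thm:single}.

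Once the gadget is calibrated, the correctness argument splits into the usual two directions, following the template already in place. For the completeness (YES) direction I would take an optimal \textsc{Min-Rep} solution, set $w_v = -1$ for selected vertices and $w_v = 0$ otherwise (with $w_1 = 1$, $w_\epsilon = \epsilon$), and verify that only the cost points of selected vertices are violated, giving squared error exactly $\epsilon^2 \cdot \optmin$. For the soundness (NO) direction I would argue contrapositively: given weights achieving small error, first pin $w_1 \approx 1$ and $0 \le w_\epsilon \le 0.2/m$ as before, then define the selected set as $\{v : w_v < -w_\epsilon\}$ (threshold selection, mirroring $\cS_{<-w_\epsilon}$), show that every super-edge is consistently covered—because otherwise its coverage constraint incurs error exceeding the target—and finally bound the number of selected vertices by the error divided by $\epsilon^2$. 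Combining the two directions shows the training error is within a constant of $\epsilon^2 \cdot \optmin$, so the almost-polynomial hardness of \textsc{Min-Rep} yields the claimed $(nm)^{1/(\log\log(nm))^{O(1)}}$ inapproximability for the single-ReLU training problem.
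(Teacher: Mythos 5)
Your proposal has a genuine gap at its core, and it sits exactly where you flagged it: the consistency requirement of \textsc{Min-Rep} cannot be enforced by one coverage constraint per super-edge, no matter how you calibrate $w_1$, the multiplicities, or the $\epsilon$-scales. The obstruction is linear non-separability. Take a super-edge between groups $\{a_1,a_2\}$ and $\{b_1,b_2\}$ whose edge set is $\{(a_1,b_1),(a_2,b_2)\}$, and consider the intended encoding (selected vertices at $-1$, others at $0$). For any choice of $w_1$ and data coordinates $c_v$, the rectifier argument $L(S) = w_1 - \sum_{v \in S} c_v$ satisfies $L(\{a_1,b_1\}) + L(\{a_2,b_2\}) = L(\{a_1,b_2\}) + L(\{a_2,b_1\})$; hence you cannot have both consistent pairs drive the argument to be nonpositive (needed for the YES case) while both inconsistent pairs leave it bounded away from zero (needed for soundness). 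A second, independent failure: the weights are unconstrained reals and your cost points charge $\epsilon^2$ per selected vertex regardless of magnitude, so a single endpoint with $w_v \approx -2$ covers a super-edge at the price of one vertex --- even ``at least two selected endpoints'' is not enforced. Note that adding more \emph{training points} cannot repair this, since points do not create new degrees of freedom; you would have to add auxiliary \emph{coordinates} (one per candidate representative pair) together with constraints tying each auxiliary variable to its two endpoints and to the coverage requirement --- that is, AND- and OR-gadgets. This is precisely the missing idea, and it is what the paper's proof supplies: it reduces from Minimum Monotone Circuit Satisfiability ($\mmcs_3$) with one variable per circuit wire, constraints $[w_j - w_{i_1} - \cdots - w_{i_k}]_+ = 0$ for OR gates, $[w_j - w_{i_p}]_+ = 0$ for AND gates, $[w_\epsilon - w_i]_+ = \epsilon$ for input costs, and $[w_o]_+ = 1$ for the output wire (Theorem~\ref{thm:montonehard}), with soundness proved by induction on wire height (Proposition~\ref{prop:height-bound}). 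Since \textsc{Min-Rep} coverage is an AND (over super-edges) of ORs (over edges) of ANDs (of two selections), completing your gadget correctly would in effect re-derive that reduction for a special class of depth-3 circuits.

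Your starting point is also shakier than you assume. The theorem claims NP-hardness, but the almost-polynomial inapproximability of \textsc{Min-Rep} is classically a \emph{quasi}-NP-hardness result, obtained via parallel repetition with quasi-polynomial blow-up; an NP-hardness version at the factor $(nm)^{1/(\log\log(nm))^{O(1)}}$ is not available off the shelf. The paper gets genuine NP-hardness by instead composing the PCP of~\cite{DHK15} (which has $(\log\log n)^{O(1)}$ queries and polynomially small soundness) with the reduction of~\cite{DinurS04} to $\mmcs_3$ (Theorem~\ref{thm:mmcs}); that pipeline naturally produces depth-3 monotone circuits, not the two-query projection structure underlying \textsc{Min-Rep}. (The unused macro $\optmin$ in the preamble is a leftover, not a hint about the actual proof.) Once both repairs are made --- $\mmcs$ as the source problem and wire variables with gate constraints as the gadget --- your YES/NO template (pin the dummies, threshold-select against $w_\epsilon$, charge $\epsilon^2$ per selection) does go through, essentially as in the paper's proof.
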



To prove Theorem~\ref{thm:single-inapprox}, we will reduce from the Minimum Monotone Circuit Satisfiability problem, which is formally defined below.

\begin{definition}
A \emph{monotone circuit} is a circuit where each gate is either an OR or an AND gate. We use $|C|$ to denote the number of wires in the circuit.
\end{definition}

\begin{definition}
In the \textsc{Minimum Monotone Circuit Satisfiability$_i$ (MMCS$_i$)} problem, we are given a monotone circuit of depth $i$, and the objective is to assign as few \textsc{True}s as possible to the input wires while ensuring that the circuit is satisfied (i.e. output wire is evaluated to \textsc{True}).

For any monotone circuit $C$, we use $\optmmcs(C)$ to denote the optimum of the MMCS problem on $C$, i.e., the smallest number of input wires need to be set to \textsc{True} so that $C$ is satisfied.
\end{definition}

The hardness of approximating \mmcs\ has long been studied (e.g.~\cite{AlekhnovichBMP01,DinurS04}). By now, this problem is known to be NP-hard to approximate to within a factor of $|C|^{1/(\log \log |C|)^{O(1)}}$:
\begin{theorem}[\cite{DHK15}\footnote{\cite{DHK15} in fact shows that there exists a PCP with $D = (\log \log n)^{O(1)}$ query over alphabet of size $n^{O(1/D)}$ with completeness 1 and soundness $1/n^{\Omega(1)}$. The result we use (Theorem~\ref{thm:mmcs}) follows from their result and from the reduction in Section 3 of~\cite{DinurS04} which shows how to reduce $D$-query PCP over alphabet $F$ with completeness 1 and soundness $s$ to an $\mmcs_3$ instance of size $F^D\poly(n)$ and gap $O(1/s)^{1/D}/D$. Plugging this in immediately implies the hardness we use.}] \label{thm:mmcs}
$\mmcs_3$ is NP-hard to approximate to within $|C|^{1/(\log \log |C|)^{O(1)}}$ factor.
\end{theorem}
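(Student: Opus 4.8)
The plan is to establish Theorem~\ref{thm:mmcs} by composing two ingredients, exactly as indicated in the footnote: the strong low-soundness PCP of Dinur, Harsha, and Kindler~\cite{DHK15} together with the PCP-to-$\mmcs_3$ reduction of Dinur and Safra~\cite[Section 3]{DinurS04}. Concretely, I would start from an NP-complete language and invoke the PCP of~\cite{DHK15}: for instances of size $n$ there is a verifier making $D = (\log\log n)^{O(1)}$ queries to a proof written over an alphabet $F$ of size $|F| = n^{O(1/D)}$, with perfect completeness (YES instances have a proof accepted with probability $1$) and soundness $s = 1/n^{\Omega(1)}$ (NO instances accept with probability at most $s$). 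The goal is then to turn each such PCP instance into a monotone depth-$3$ circuit whose cheapest satisfying input assignment mirrors the acceptance probability of the verifier, so that the completeness/soundness gap of the PCP becomes an approximation gap for $\mmcs_3$.

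Next I would spell out the reduction of~\cite{DinurS04}. Introduce one input wire for every (proof position, symbol) pair $(p,v)$ with $v \in F$, read as the assertion ``position $p$ holds value $v$.'' The circuit is an AND--OR--AND of depth $3$: the top AND ranges over the verifier's tests (random strings), each middle OR ranges over the local assignments to the queried positions that make that test accept, and each bottom AND collects the input wires of one such accepting local assignment. Thus the output wire is \textsc{True} precisely when every test is covered by some accepting local assignment all of whose wires are set. The size is $|C| = |F|^{D}\poly(n)$, since each test contributes at most $|F|^{D}$ accepting configurations.

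For the gap I would argue the two directions. In the YES case, a proof $\pi$ accepted with probability $1$ induces the assignment that sets, for each position $p$, the single wire $(p,\pi(p))$ to \textsc{True}; this satisfies the bottom AND selected by $\pi$ in every test, hence the whole circuit, so $\optmmcs$ is at most the number of positions. In the NO case---the main technical content of~\cite{DinurS04}---one decodes any cheap satisfying assignment into a randomized proof by sampling a \textsc{True} value at each position; the acceptance probability of this strategy is bounded below in terms of the number of \textsc{True} wires, so a too-cheap satisfying assignment would violate soundness $s$. Performing this decoding while holding the circuit at depth $3$ forces the $D$ queries to be split into blocks, which is exactly where the $D$-th root enters, yielding a multiplicative gap of $O(1/s)^{1/D}/D$.

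Finally I would do the parameter bookkeeping to recast the gap in terms of $|C|$. Since $|F| = n^{O(1/D)}$, we get $|C| = |F|^{D}\poly(n) = n^{O(1)}\poly(n) = \poly(n)$, so $\log\log|C| = \Theta(\log\log n)$ and $D = (\log\log n)^{O(1)} = (\log\log|C|)^{O(1)}$. Plugging $s = 1/n^{\Omega(1)}$ into the gap gives $(1/s)^{1/D}/D = n^{\Omega(1/D)}/D = |C|^{\Omega(1/D)}/D$, and absorbing the polynomial-in-$\log\log|C|$ factor $D$ into the exponent's $(\log\log|C|)^{O(1)}$ term yields a gap of $|C|^{1/(\log\log|C|)^{O(1)}}$, as claimed. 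I expect the main obstacle to be the soundness/decoding step: extracting the precise $O(1/s)^{1/D}/D$ gap from a constant-depth (depth-$3$) circuit, rather than a weaker bound, is the delicate part, and it is precisely the reason the construction partitions the $D$ queries and thereby incurs only a $D$-th-root loss in the soundness parameter.
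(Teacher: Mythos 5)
Your proposal takes exactly the route the paper itself indicates in the footnote to Theorem~\ref{thm:mmcs}: compose the $D = (\log\log n)^{O(1)}$-query, alphabet-size-$n^{O(1/D)}$, soundness-$1/n^{\Omega(1)}$ PCP of~\cite{DHK15} with the PCP-to-$\mmcs_3$ reduction of~\cite{DinurS04} (size $|F|^D\poly(n)$, gap $O(1/s)^{1/D}/D$), and your parameter bookkeeping ($|C| = \poly(n)$, hence $D = (\log\log|C|)^{O(1)}$ and gap $|C|^{\Omega(1/D)}/D = |C|^{1/(\log\log|C|)^{O(1)}}$) is correct. The only difference is that you also sketch the internals of the~\cite{DinurS04} AND--OR--AND construction, which the paper invokes as a black box; this is a harmless elaboration rather than a different approach.
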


The main result of this subsection is that, for any $\ell > 0$, there is a polynomial-time reduction from $\mmcs_\ell$ to the problem of minimizing the training error in single ReLU such that the optimum of the latter is proportional to the optimum of the former. From Theorem~\ref{thm:mmcs} above, this immediately implies Theorem~\ref{thm:single-inapprox}. The reduction is stated and proved below.


\begin{theorem}\label{thm:montonehard}
For every $\ell > 0$, there is a polynomial-time reduction that on a depth-$\ell$ monotone circuit $C$ produces samples $\{(\bx_i, y_i)\}_{i \in [m]}$ such that the minimum squared training error for these samples among single ReLUs is $\optmmcs(C) / (10|C|)^{2\ell + 2}$.
\end{theorem}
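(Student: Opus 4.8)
The plan is to generalize the set-cover reduction of Theorem~\ref{thm:single}, observing that set cover is essentially $\mmcs_2$ (an AND of element-ORs), and to lift that gadget to arbitrary depth by a level-by-level geometric scaling. I introduce one weight $w_g$ per wire $g$ of $C$, together with a dummy ``constant'' wire $w_\star$ pinned to $\approx 1$ by many repeated copies of the isolated sample $[w_\star]_+ = 1$ (just as $w_1,w_\epsilon$ were pinned in Theorem~\ref{thm:single}). To each wire at depth $d$ (inputs at depth $0$, output at depth $\ell$) I assign an intended \emph{activation value} $s_d = (10|C|)^{d-\ell}$, so that $s_\ell = 1$ and $s_0 = (10|C|)^{-\ell}$, and I write $\lambda := s_d/s_{d-1} = 10|C|$ for the constant per-level ratio; I set $\delta := (10|C|)^{-\ell-1}$, the intended per-true-input penalty, so that $\delta^2 = 1/(10|C|)^{2\ell+2}$. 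The intended (``honest'') solution will be $w_g = s_{\mathrm{depth}(g)}$ on wires that are \textsc{True} and $w_g=0$ on wires that are \textsc{False}.

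Each gate is encoded by rectified-linear equality constraints that vanish on the honest solution. For an AND gate $g=\bigwedge_i g_i$ I add, for every incoming wire, a sample realizing $[w_g - \lambda w_{g_i}]_+ = 0$, which is free exactly when $w_g \le \lambda w_{g_i}$ and hence lets $g$ be active only if \emph{every} child is active. For an OR gate $g=\bigvee_i g_i$ I add a single sample realizing $[w_g - \lambda\sum_i w_{g_i}]_+ = 0$, free exactly when $w_g \le \lambda\sum_i w_{g_i}$ and hence letting $g$ be active only if \emph{some} child is; the element-constraints~\eqref{equation:first} are precisely this gadget at the bottom level. I force the output wire active by the sample $[w_{\mathrm{out}}]_+ = 1$, and I charge each input $g$ by a sample of the form $[\delta w_\star - w_g]_+ = \delta$ (coordinate $\delta$ on $w_\star$, coordinate $-1$ on $w_g$): as in~\eqref{equation:second}, its squared error is $0$ when $w_g=0$ and, by the clipping of the rectifier, a flat $\delta^2$ once $w_g \ge \delta$, so that, exactly as in set cover, spreading or over-activating an input is never cheaper than a single clean activation.

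Completeness is then immediate and \emph{exact}: starting from an optimal \mmcs\ assignment with $\optmmcs(C)$ true inputs and the honest weights above, every gate constraint evaluates to $[\,0\,]_+=0$ (for an active OR with one active child, $w_g-\lambda\sum_i w_{g_i}=s_d-s_d=0$; for an active AND, $w_g-\lambda w_{g_i}=s_d-s_d=0$; inactive gates give $[-\lambda(\cdots)]_+=0$), the output sample is satisfied since $w_{\mathrm{out}}=s_\ell=1$, and $w_\star$ is set to $1$, so the only error is the flat $\delta^2$ paid by each true input, for a total of exactly $\optmmcs(C)\,\delta^2=\optmmcs(C)/(10|C|)^{2\ell+2}$.

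For soundness I take any weights with total error $E$, round each wire to \textsc{True}/\textsc{False} by a depth-dependent threshold, and let $T$ be the set of inputs rounded \textsc{True}. The goal is to show (i) $T$ satisfies $C$, so $|T|\ge\optmmcs(C)$, and (ii) each input in $T$ contributes its full $\delta^2$, so $E\ge|T|\,\delta^2\ge\optmmcs(C)\,\delta^2$; together with completeness this pins the optimum to exactly $\optmmcs(C)/(10|C|)^{2\ell+2}$. The main obstacle---and the reason for the geometric scaling with the large base $10|C|$---is the soundness accounting across the $\ell$ levels: a cheap solution may violate a gate constraint slightly and may ``spread'' an OR gate's activation over several children, which a priori threatens both the propagation of activation down to a genuinely satisfying input set and the exactness of the per-input charge. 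The scaling is what defeats this. In the regime of interest $E<|C|\,\delta^2$, so every single violation has magnitude at most $\sqrt{E}\ll s_d$ at every level, making the rounding robust; and because the ratio $\lambda=10|C|$ dwarfs the fan-in, an OR gate's activation, however spread, must still drive at least one child above the next threshold, so that following the activation downward reaches an input set that really does satisfy $C$, while any spreading only forces \emph{additional} inputs above $\delta$, each still paying the flat $\delta^2$. The delicate point is to choose the thresholds so that (i) and (ii) hold \emph{simultaneously} and give exact equality rather than equality up to constants; this is exactly what fixes the base $10|C|$ and the exponent $2\ell+2$. Two minor technical matters, handled as elsewhere in the paper, are reducing to bounded fan-in (or absorbing the fan-in factor into the scaling) before applying the level recursion, and accommodating non-zero biases, deferred to Appendix~\ref{app:bias}.
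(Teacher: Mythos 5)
Your gadget skeleton (one weight per wire, a pinned dummy variable, gate inequalities realized as clipped equalities, per-input charging, output pinned to $1$) matches the paper's construction, but the two design decisions you add on top of it --- the geometric per-level scaling $\lambda = 10|C|$ and the top-down threshold/certificate soundness argument --- interact badly, and the ``delicate point'' you defer is exactly where the proof breaks, not merely where it gets technical. Quantitatively: from your OR gadget $[w_g - \lambda\sum_i w_{g_i}]_+ = 0$, an error budget $E$ only yields $\sum_i w_{g_i} \geq (w_g - \sqrt{E})/\lambda$, hence some child with $w_{g_i} \geq (w_g - \sqrt{E})/(\lambda F)$ where $F$ is the fan-in. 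So the activation lower bound you can propagate decays by a factor $\lambda F \approx 10|C|^2$ per level --- note that the scaling makes the constraint \emph{weaker} in precisely the direction your argument needs --- and after $\ell$ levels the inputs you reach are only guaranteed weight about $0.9\,(10|C|^2)^{-\ell}$, which for $\ell \geq 2$ (in particular for the depth-3 circuits of Theorem~\ref{thm:mmcs} that drive Theorem~\ref{thm:single-inapprox}) is far below your charging threshold $\delta = (10|C|)^{-\ell-1}$. Such inputs then pay strictly less than $\delta^2$ on the constraint $[\delta w_\star - w_g]_+ = \delta$ (an input with $0 < w_g < \delta w_\star$ pays only about $w_g^2$), and your lower bound $E \geq |T|\,\delta^2$ collapses. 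This is not fixable by a cleverer choice of thresholds: for a depth-3 circuit with one gate of fan-in roughly $|C|/3$ per level, any valid threshold sequence loses the factor $\lambda F$ per level, and $\bigl(10|C| \cdot |C|/3\bigr)^{-3}$ is already smaller than $(10|C|)^{-4}$ once $|C|$ is moderately large. A second, smaller gap: your exact completeness computation $w_g - \lambda w_{g_i} = s_d - s_d = 0$ assumes every child of a depth-$d$ gate sits at depth exactly $d-1$; general circuits are not layered, and layering them changes $|C|$ and hence the claimed constant $(10|C|)^{2\ell+2}$.

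Both problems disappear if you drop the scaling, which is what the paper does: intended values are $1$/$0$ at every height, the gates are $[w_j - w_{i_1} - \cdots - w_{i_k}]_+ = 0$ (OR) and $[w_j - w_{i_p}]_+ = 0$ (AND), and the charge is $[w_\epsilon - w_i]_+ = \epsilon$ with $\epsilon = (10|C|)^{-\ell-1}$. The paper's soundness also runs in the opposite direction from yours: it defines $\phi$ only at the inputs (\textsc{True} iff $w_i \geq w_\epsilon$), evaluates the circuit on $\phi$, and proves bottom-up (Proposition~\ref{prop:height-bound}) that every \textsc{False} wire at height $h$ has weight at most $(2|C|)^h(\varepsilon + \sqrt{\delta})$; the output constraint then forces $\phi$ to satisfy $C$, while each \textsc{True} input pays exactly $\varepsilon^2$ by the very definition of $\phi$. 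In that argument the fan-in enters only additively inside an upper bound that is allowed to grow, so no per-level threshold bookkeeping is needed and heterogeneous heights are harmless. For what it is worth, your top-down certificate extraction can be made to work, but only after the scaling is removed: with $\lambda = 1$ the per-level decay is just the fan-in, the reachable inputs have weight on the order of $|C|^{-\ell}$, which comfortably exceeds $(10|C|)^{-\ell-1}$, and your charging argument then goes through.
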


\begin{proof}
Let $\varepsilon := 1/(10|C|)^{\ell + 1}$. We consider a ReLU with $n = |C| + 1$ variables. For each wire $j$, we create a variable $w_j$. Additionally, we have a dummy variable $w_\epsilon$. (In the desired solution, we want $w_j$ to be 1 iff the wire is evaluated to \textsc{True} and 0 otherwise, and $w_\epsilon = \epsilon$.) Lastly, we set $\alpha_1 = 1$.

{\bf Dummy Variable Constraint.} We add the following constraint
\begin{align} \label{eq:dummy-eps}
[w_\epsilon]_+ = \epsilon.
\end{align}

{\bf Input Wire Constraint.} For each input wire $i$, we add the constraint
\begin{align} \label{eq:input-wire}
[w_\epsilon - w_i]_+ = \epsilon.
\end{align}

{\bf Output Wire Constraint.} For the output wire $o$, we add the constraint
\begin{align} \label{eq:output-wire}
[w_o]_+ = 1.
\end{align}

{\bf OR Gate Constraint.} For each OR gate with input wires $i_1, \dots, i_k$ and output wire $j$, we add the constraint
\begin{align} \label{eq:or-gate}
[w_j - w_{i_1} - \cdots - w_{i_k}]_+ = 0.
\end{align}

{\bf AND Gate Constraint.} For each AND gate with input wires $i_1, \dots, i_k$ and output wire $j$, we add the following $k$ constraints:
\begin{align} \label{eq:and-gate}
[w_j - w_{i_1}]_+ = 0, \cdots, [w_j - w_{i_k}]_+ = 0.
\end{align}

We will now show that the minimum squared training error is exactly $\optmmcs(C) \cdot \varepsilon^2$. First, we will show that the error is at most $\optmmcs(C) \cdot \varepsilon^2$. Suppose that $\phi$ is an assignment to $C$ with $\optmmcs(C)$ \textsc{True}s that satisfies the circuit. We assign $w_\epsilon = \epsilon$, and, for each wire $j$, we assign $w_j$ to be 1 if the wire $j$ is evaluated to be \textsc{True} on input $\phi$ and 0 otherwise. It is clear that every constraint is satisfied except the input wire constraints~\eqref{eq:input-wire} for the wires that are assigned to \textsc{True} by $\phi$. There are exactly $\optmmcs(C)$ such wires, and each contributes $\varepsilon^2$ to the error; as a result, the training error of such weights is exactly $\optmmcs(C) \cdot \varepsilon^2$.

Next, we will show that the minimum squared training error is at least $\optmmcs(C) \cdot \varepsilon^2$. Suppose for the sake of contradiction that the minimum error $\delta$ is less than $\optmmcs(C) \cdot \varepsilon^2$. Observe that, from $\optmmcs(C) \leq |C|$ and from our choice of $\varepsilon$, we have
\begin{align} \label{eq:delta-bound}
\delta < |C| \cdot \varepsilon^2 < 0.1
\end{align}

Consider an assignment $\phi$ that assigns each input wire $i$ to be \textsc{True} iff $w_i \geq w_\epsilon$. The following proposition bounds the weight of every \textsc{False} wire.

\begin{proposition} \label{prop:height-bound}
For any wire $j$ at height $h$ that is evaluated to \textsc{False} on $\phi$, $w_j \leq (2|C|)^h \cdot (\varepsilon + \sqrt{\delta})$.
\end{proposition}

Note that we define the height recursively by first letting the heights of all input wires be zero and then let the height of the output wire of each gate $G$ be one plus the maximum of the heights among all input wires of $G$. The proof of this proposition, which is based on a simple induction, is deferred to Appendix~\ref{app:induction}.

Now, consider the output wire $o$. We claim that $o$ must be evaluated to \textsc{True} on $\phi$. Otherwise, Proposition~\ref{prop:height-bound} ensures that $w_o$ is at most
\begin{align*}
(2|C|)^{\ell} \cdot (\varepsilon + \sqrt{\delta}) \stackrel{\eqref{eq:delta-bound}}{<} (2|C|)^\ell \cdot (\varepsilon + \sqrt{|C|} \cdot \varepsilon) \leq 0.1,
\end{align*}
where the second inequality comes from our choice of $\varepsilon$. This would mean that the squared error incurred in~\eqref{eq:output-wire} is at least $0.81 > \delta$. Thus, it must be that $\phi$ satisfies $C$.

Moreover, since $\phi$ assigns each input wire $i$ to be \textsc{True} iff $w_i \geq w_\epsilon$, each input wire that is assigned \textsc{True} incurs a squared error of $\varepsilon^2$ from~\eqref{eq:input-wire}. Thus, the number of input wires assigned \textsc{True} is at most $\frac{\delta}{\varepsilon^2} < \optmmcs(C)$, which is a contradiction as we argued that $\phi$ satisfies $C$. 
\end{proof}


Observe that, in both Theorem~\ref{thm:montonehard} and Theorem~\ref{thm:single}, the target squared error tends to zero as the dimension tends to infinity. However, this is not an issue: if the norms of the sample vectors are not required to be bounded, then we can simply multiply them by any factor to make the error arbitrarily large. On the other hand, our learning algorithm below implies that, when the norms of samples and weights of ReLUs are bounded, we can approximate the minimum training error for $k$ ReLUs up to an additive error of $\epsilon$ in time $2^{(k/\varepsilon)^{O(1)}}\cdot poly(n)$.

\section{NP-hardness of Training Two ReLUs}\label{sec:tworelu}

We next prove that, for two ReLUs, not only the training problem is NP-hard, but it is NP-hard to even determine whether the samples are realizable. (We remark that this also rules out any multiplicative approximation for the training problem with two ReLUs.) This is in contrast with the single ReLU case, where the realizable case is easy to solve (see Appendix~\ref{sec:single_rel}).

\begin{theorem}\label{thm:two_rel}
It is NP-hard to determine, given labeled samples of a network consisting of two ReLUs, whether it is possible to assign weights to the units such that the training error is $0$.
\end{theorem}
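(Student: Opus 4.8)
The plan is to reduce from an NP-hard problem whose feasible solutions correspond to placements of two hyperplanes through the origin. Following the convention established above, I would take the biases to be zero, so the network computes $f(\bz) = \alpha_1[\langle \bw^1, \bz\rangle]_+ + \alpha_2[\langle \bw^2, \bz\rangle]_+$. The central observation is that achieving zero training error forces the two hyperplanes $\langle \bw^1, \bz\rangle = 0$ and $\langle \bw^2, \bz\rangle = 0$ to partition the sample points into activation regions on which $f$ is linear, and within each region the prescribed labels must be matched exactly. Thus realizability becomes equivalent to the existence of two hyperplanes realizing a prescribed activation-and-value pattern, and my job is to make that condition equivalent to a YES instance of the source problem. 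A clean source is the NP-hardness of separating point sets by an intersection of two halfspaces (polyhedral separability); alternatively one can reduce from a SAT/CSP variant and encode variables and clauses directly into coordinates.

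First I would introduce samples in three groups. A constant number of sign- and scale-forcing gadget points (in the spirit of the single-ReLU constraint $[w_o]_+=1$ used earlier) pin down $\alpha_1 = \alpha_2 = +1$ and fix a normalization, so that in every zero-error solution $f \ge 0$ and neither unit can be trivially rescaled away. Next, \emph{element} points labeled $0$: since $f \ge 0$, such a point can be fit only if it lies in the intersection of the two \emph{off} halfspaces $\{\langle \bw^1, \cdot\rangle \le 0\} \cap \{\langle \bw^2, \cdot\rangle \le 0\}$, which lets me encode membership constraints. Finally, \emph{constraint} points with positive labels, which must lie outside that intersection so that at least one unit fires; the exact positive value each must hit encodes the combinatorial requirement. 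The coordinate map from combinatorial objects would be arranged so that a feasible combinatorial solution yields an explicit pair $\bw^1, \bw^2$ meeting every constraint with zero error.

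For correctness, in the YES case I would read off $\bw^1, \bw^2$ from the combinatorial solution and verify directly that every gadget, element, and constraint point is fit exactly, choosing the target values in the construction precisely so that the induced linear pieces pass through them. In the NO case I would argue the contrapositive: given any weights with zero error, the sign-forcing gadgets guarantee $\alpha_1 = \alpha_2 = +1$; the $0$-labeled points then certify that the two halfspaces cover the required region, and the positive-labeled points certify the complementary condition; decoding these yields a feasible solution of the source instance, a contradiction.

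The hard part will be the NO direction, specifically ruling out every degenerate weight configuration: a unit whose hyperplane is positioned so that it is always off (or always on) over the samples, the two weight vectors coinciding or being anti-parallel, and, if the signs are not fully forced, configurations with a negative $\alpha_j$ that could cancel contributions. I expect to need enough redundant gadget points (and possibly a few auxiliary coordinates) to make the activation pattern of each sample rigid, so that zero error admits only the intended two-hyperplane interpretation. Securing exact equality, rather than merely correct classification, at the positive-labeled points while keeping the partition consistent across region boundaries is the other delicate point, and it is what makes \emph{realizability} (zero error) the right notion to target here.
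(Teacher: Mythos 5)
Your outline has the right general shape---$0$-labeled points force both units off, positive-labeled gadget points pin down signs---but it is missing the idea that actually makes a reduction work, and your preferred source problem cannot supply it. The obstacle is the one you yourself flag and then leave unresolved: zero training error is an \emph{exact interpolation} condition, not a classification condition. If you reduce from polyhedral separability (intersection of two halfspaces), a YES instance gives you hyperplanes for which the values $[\langle \bw^1,\bx\rangle]_+ + [\langle \bw^2,\bx\rangle]_+$ at the points outside the intersection are completely uncontrolled positive reals, so there is no way to write down, at reduction time, positive labels that a separating solution is guaranteed to match; the YES direction of your reduction simply fails. The natural fixes make things worse: giving each positive point a private slack coordinate (so its value can be tuned independently) lets \emph{every} instance be fit exactly regardless of separability, destroying the NO direction. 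So ``securing exact equality'' is not a delicate finishing touch---it is the crux, and nothing in your proposal addresses it.

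The paper's proof resolves exactly this by reducing from 3SAT with gadgets whose solution values are rigid. For each variable $X_i$ it adds the pair of constraints $[w^1_i]_+ + [w^2_i]_+ = 1$ and $[-w^1_i]_+ + [-w^2_i]_+ = 1$ (\eqref{eq:boolean1}, \eqref{eq:boolean2}), which force $w^1_i = -w^2_i \in \{\pm 1\}$; this anti-symmetry $\bw^2 = -\bw^1$ on the variable coordinates is precisely what lets the YES case hit every label exactly (each such pair of points evaluates to $1$ no matter which Boolean value is chosen). A dummy coordinate with $[v^1]_+ + [v^2]_+ = 4$ (\eqref{eq:dummy}) forces $\min(v^1,v^2) \le 2$, and each clause becomes a single $0$-labeled point (\eqref{eq:clause}) which, given $v^1 \le 2$ and weights in $\{\pm 1\}$, can have both units off if and only if some literal of the clause is true; in the YES case setting $v^1 = 1$, $v^2 = 3$ satisfies everything exactly. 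Your ``SAT/CSP variant'' alternative is a one-line mention of this route, but all of the content of the theorem lives in these gadgets---the Boolean-forcing pair, the mirrored second unit, and the calibrated dummy value---none of which appear in your proposal. As written, the proposal is a plan with a genuine gap rather than a proof.
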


\begin{proof}
We reduce from the 3SAT problem. Recall that, in the 3SAT problem, we are given 3CNF formulas with $M$ clauses on $N$ Boolean variables $X_1, \dots, X_N$ and we would like to determine whether there exists an assignment that satisfies the formula. 

The reduction proceeds as follows. Let $n = N + 1$ and $m = 2N + M + 1$. We view the $n$-th coordinate of each sample as a coefficient of dummy variables which we will refer to as $v^1$ ($= w^1_n$) and $v^2$ ($= w^2_n$). Moreover, let $\alpha_1 = \alpha_2 = 1$.

The first sample has only one non-zero coordinate corresponding to $v$ which is set to one and has label $4$, i.e., this corresponds to
\begin{align} \label{eq:dummy}
[v^1]_+ + [v^2]_+ = 4.
\end{align}

Next, for every variable $X_i$, we add constraints
\begin{align}
[w^1_i]_+ + [w^2_i]_+ &= 1, \label{eq:boolean1} \\
[-w^1_i]_+ + [-w^2_i]_+ &= 1. \label{eq:boolean2}
\end{align}
Finally, for each clause $C_j = (b_1 \vee b_2 \vee b_3)$, we add a constraint as follows. For $p = 1, 2, 3$, let $X_{i_p}$ denote the variable corresponding to the literal $b_p$; moreover, let $n_p$ be +1 if the literal is positive and -1 otherwise. We then add the following constraint for this clause:
\begin{align} \label{eq:clause}
[-v^1 - n_1 \cdot w^{1}_{i_1} - n_2 \cdot w^{1}_{i_2} - n_3 \cdot w^{1}_{i_3}]_+ + \nonumber \\
[-v^2 - n_1 \cdot w^{2}_{i_1} - n_2 \cdot w^{2}_{i_2} - n_3 \cdot w^{2}_{i_3}]_+ = 0.
\end{align}

The reduction clearly runs in polynomial time. Next, we argue the correctness of the reduction. 

(YES Case) We will start with the YES case. Suppose that the formula is satisfiable. That is, there exists an assignment $\phi: [N] \to \{0, 1\}$ that satisfies all clauses. Set $v^1 = 1, v^2 = 3$ and, for every $i \in [N]$, $w^1_i = 2\phi(i) - 1$ and $w^2_i = 1 - 2\phi(i)$. It is easy to verify that all constraints are satisfied, i.e., that the samples are realizable by a sum of two ReLUs with boolean weights.

(NO Case) We will prove the contrapositive. Suppose that there exist $\bw^1, \bw^2 \in \mathbb{R}^n$ that satisfies all the constraints. Constraint~\eqref{eq:dummy} implies that at least one of $v^1$ and $v^2$ must be at most 2; we assume w.l.o.g. that $v^1 \leq 2$. It is then easy to see that any $w^1_i, w^2_i$ that satisfy~\eqref{eq:boolean1} and \eqref{eq:boolean2} must satisfy $w^1_i, w^2_i \in \{\pm 1\}$. Define an assignment $\phi$ for the 3CNF formula by $\phi(i) = 1$ if $w^1_i = +1$ and $\phi(i) = 0$ if $w^1_i = -1$. Finally, observe that,~\eqref{eq:clause} implies that, for every clauses $C_j$, at least one of $n_1 \cdot w^1_{i_1}$, $n_2 \cdot w^1_{i_2}$ and $n_3 \cdot w^1_{i_3}$ must be $+1$ (as, otherwise, the sum $-v^1 - n_1 \cdot w^{1}_{i_1} - n_2 \cdot w^{1}_{i_2} - n_3 \cdot w^{1}_{i_3}$ must be at least $-2 + 1 + 1 + 1 > 0$); this indeed means that the corresponding literal must be set to true by $\phi$. As a result, $\phi$ must satisfy all the clauses in the formula, as desired.
\end{proof}




We remark that, once again, the hardness in Theorem~\ref{thm:two_rel} applies even to the case where $\alpha_1, \alpha_2$ are treated as unknowns. Specifically, \eqref{eq:boolean1} and \eqref{eq:boolean2} already enforce both $\alpha_1$ and $\alpha_2$ to be positive.

\section{Learning ReLUs} \label{sec:learning}


We follow the agnostic learning model for real-valued functions from~\cite{Haussler92,KearnsSS94}. (This is in turn based on the PAC learning model for $\{0, 1\}$-valued functions~\cite{Valiant84}.) A \emph{concept class} $\cC: \cY^{\cX}$ is any set of functions from $\cX$ to $\cY$. We say that a concept class $\cC$ is \emph{agnostically learnable} with respect to a loss function $\ell: \cY^2 \to \mathbb{R}$ if, for every $\delta, \varepsilon > 0$, there is an algorithm $\cA$ such that, for any distribution $\cD$ over $\cX \times \cY$, receives as input independent random samples from $\cD$ and outputs a hypothesis $h \in \cY^{\cX}$ such that, with probability $1 - \delta$, $$\cL(h; \cD) \leq \inf_{c \in \cC} \cL(c; \cD) + \varepsilon$$ where $\cL(f; \cD) := \E_{(\bx, y) \sim \cD}[\ell(f(\bx), y)]$ is the expected loss for $f$ over $\cD$.
If the output hypothesis $h$ belongs to the concept class $\cC$, then we said that it is \emph{properly} agnostically learnable.

Another model we consider is the \emph{reliable} agnostic learning model; in the real-valued setting, this model was first defined in~\cite{goel2016reliably}, based on the model of~\cite{KalaiKM12} for the standard PAC learning model. Informally speaking, reliability puts more emphasis on false positives, i.e., $(\bx, y)$ supported on $\cD$ such that $y = 0$ but $h(\bx) > 0$. The additional requirement is that such false positives should only happen with probability $\leq \varepsilon$. (For motivations of the model, see e.g.~\cite{goel2016reliably}.)

More formally, we say that a concept class $\cC$ is \emph{reliably agnostically learnable} with respect to loss function $\ell$ if, for every $\delta, \varepsilon > 0$, there is an algorithm $\cA$ such that, for any distribution $\cD$ over $\cX \times \cY$, takes independent random samples from $\cD$ and outputs a hypothesis $h$ such that, with probability $1 - \delta$, the following holds:
\begin{align*}
\cL_{=0}(h; \cD) &\leq \varepsilon, \\
\cL(h; \cD) &\leq \inf_{c \in \cC^+(\cD)} \cL(c; \cD) + \varepsilon.
\end{align*}
where $\cL_{=0}(h; \cD) = \Pr_{(\bx, y) \sim \cD}[h(\bx) > 0 \wedge y = 0]$ is the probability of false positive, and $\cC^+(\cD) = \{c \in \cC \mid \cL_{=0}(c; \cD) = 0\}$ denote all functions in the concept class that (with probability 1) do not admit any false positives. Similar to before, we say that $\cA$
is \emph{proper} if $h \in \cC$.

Before we move on, we remark that, in the reliable model, the error $\cL(h; \cD)$ is only compared to $\cL(c; \cD)$ for $c$ that does not admit any false positives, unlike in the (non-reliable) agnostic learning model where all $c \in \cC$ are considered. In other words, the fact that a concept class $\cC$ is reliably agnostically learnable does not necessarily imply that it is agnostically learnable. It is also not hard to verify that the fact that a concept class $\cC$ is agnostically learnable does not imply that it is reliably agnostically learnable.

\subsection{Our Results}

We now proceed to state our results. The concept classes we consider are the classes of sums of $k$ ReLUs, where each weight vector has norm at most one, and the distribution $\cD$ is allowed to be any distribution on the unit ball. More specifically, the class ReLU$(n, k)$, which represent the sums of $k$ ReLUs, is defined as follows:

\begin{definition}
For any $n, k \in \mathbb{N}$, $\bw^1, \dots, \bw^k \in \cB^n$ and $b_1, \dots, b_k \in [-1, 1]$, let $\relu_{\bw^1, \dots, \bw^k}^{b_1, \dots, b_k}: \cB^n \to [0, 2k]$ denote the function $\bx \mapsto \sum_{j=1}^k [\left<\bw_j, \bx\right> + b_j]_+$.

Let ReLU$(n, k)$ denote the class $\{\relu_{\bw^1, \dots, \bw^k}^{b_1, \dots, b_k} \mid \bw^1, \dots, \bw^k \in \cB^n, b_1, \dots, b_k \in [-1, 1]\}$.
\end{definition}

We show that, for any fixed number of ReLUs $k$ and error parameter $\varepsilon > 0$, the class above can be efficiently agnostically properly learned (both reliably and non-reliably), as stated below.

\begin{theorem} \label{thm:learning}
For any $n, k \in \mathbb{N}$, ReLU$(n, k)$ can be agnostically properly learned for the squared loss function in time $2^{O(k^5/\varepsilon^2)} \cdot (n/\delta)^{O(1)}$ time.
\end{theorem}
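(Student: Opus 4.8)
The plan is to combine the two ingredients outlined in the introduction: a uniform-convergence (generalization) bound showing that it suffices to minimize the empirical squared loss over only $m = O(k^4/\varepsilon^2)$ samples, together with the observation that the \emph{exact} empirical risk minimizer over ReLU$(n,k)$ can be computed, via (an adaptation of) the algorithm of~\cite{arora2018understanding}, in time exponential in $m$ but polynomial in $n$. Since the hypothesis we return is itself a sum of $k$ ReLUs obeying the prescribed norm bounds, the learner is automatically proper. Throughout, I would assume that the labels lie in $[0, 2k]$ (matching the codomain of the concept class), so that the squared loss is bounded by $O(k^2)$; this boundedness is exactly what the generalization bound needs.

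For the sample-complexity bound I would control the Rademacher complexity of the relevant loss class. The class of affine maps $\bx \mapsto \langle \bw, \bx\rangle + b$ with $\|\bw\| \le 1$, $|b| \le 1$ on the unit ball has empirical Rademacher complexity $O(1/\sqrt m)$, via the usual estimate $\E\|\sum_i \sigma_i \bx_i\| \le \sqrt m$. Since the rectifier is $1$-Lipschitz and fixes $0$, Talagrand's contraction lemma keeps this at $O(1/\sqrt m)$ for a single ReLU, and subadditivity over the $k$ units gives $O(k/\sqrt m)$ for ReLU$(n,k)$. Finally, on the range $[0,2k]$ the map $z \mapsto (z-y)^2$ is $O(k)$-Lipschitz, so a second application of contraction bounds the Rademacher complexity of the squared-loss class by $O(k^2/\sqrt m)$. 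As the loss is bounded by $O(k^2)$, the standard symmetrization bound yields, at constant confidence, a uniform gap of $O(k^2/\sqrt m)$ between empirical and population loss; choosing $m = O(k^4/\varepsilon^2)$ makes this at most $\varepsilon/2$. The usual two-sided comparison then shows the empirical minimizer $\hat f$ satisfies $\cL(\hat f;\cD) \le \inf_{c \in \cC} \cL(c;\cD) + \varepsilon$ with probability at least $2/3$.

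It remains to compute $\hat f$, the exact minimizer of the empirical squared loss over ReLU$(n,k)$. Here I would use the activation-pattern idea behind~\cite{arora2018understanding}: a hypothesis is determined, up to its active set, by a tuple $(S_1, \dots, S_k)$ with $S_j \subseteq [m]$ recording, for the $j$-th unit, which samples have positive pre-activation. Once such a tuple is fixed, each ReLU acts linearly on the samples, so $f(\bx_i)$ is an affine function of the weights and biases; consequently the empirical squared loss is a convex quadratic, and the defining sign constraints ($\langle \bw^j,\bx_i\rangle + b_j > 0$ for $i \in S_j$ and $\le 0$ otherwise) together with the norm constraints $\|\bw^j\| \le 1$, $|b_j| \le 1$ are all convex. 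Thus for each fixed tuple the best consistent hypothesis is found by one convex program in $\poly(m,n)$ time, and the global minimizer is obtained by ranging over all $2^{km}$ tuples. This is the step where the dependence on $m$ becomes exponential; note that for $m \ll n$ this $2^{km}$ enumeration is precisely what we want, replacing the hyperplane-arrangement count $m^{O(kn)}$ used in the bounded-dimension regime of~\cite{arora2018understanding}. With $m = O(k^4/\varepsilon^2)$ the running time is $2^{O(km)}\poly(m,n) = 2^{O(k^5/\varepsilon^2)}\poly(n)$.

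Finally, to drive the failure probability down to $\delta$ without paying $\delta$ in the exponent, I would repeat the above $O(\log(1/\delta))$ times to obtain candidates $h_1, \dots, h_t$, draw an independent validation sample of size $\poly(k/\varepsilon,\log(1/\delta))$, estimate each $\cL(h_i;\cD)$ to accuracy $\varepsilon/4$ via a Hoeffding bound (the loss being bounded by $O(k^2)$), and output the candidate with the smallest estimate. This yields overall success probability $1-\delta$ and total running time $2^{O(k^5/\varepsilon^2)} \cdot (n/\delta)^{O(1)}$. I expect the main obstacle to be the exact-minimization step: one must verify that fixing the activation patterns genuinely convexifies the \emph{constrained} problem (including the norm bounds, which are absent from the original statement of~\cite{arora2018understanding}) and that enumerating $2^{km}$ patterns --- rather than arrangement cells --- indeed finds the global constrained optimum. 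The generalization and confidence-boosting steps are routine by comparison.
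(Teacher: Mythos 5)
Your proposal is correct and follows essentially the same route as the paper's proof: a Rademacher-complexity generalization bound (linear class at $O(1/\sqrt{m})$, contraction through the $1$-Lipschitz rectifier, subadditivity over the $k$ units, and the $O(k)$-Lipschitz, $O(k^2)$-bounded squared loss on labels in $[0,2k]$) reduces the task to empirical risk minimization over $m = O(k^4/\varepsilon^2)$ samples, and the ERM is computed by enumerating the $2^{km}$ activation patterns and solving one convex program per pattern including the sign constraints and the norm/bias constraints --- exactly the adaptation of \cite{arora2018understanding} that the paper isolates as Lemma~\ref{lem:exp-algo} and proves in its appendix, so your flagged ``main obstacle'' is genuinely fine. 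The one point where you deviate is the confidence parameter: the paper folds $\log(1/\delta)$ into the sample size $m$ (via the $2b\sqrt{\log(1/\delta)/m}$ term of the uniform-convergence bound), which strictly speaking puts $\log(1/\delta)$ into the exponent $km$ of the running time, i.e.\ a factor $(1/\delta)^{O(k^5/\varepsilon^2)}$; your repeat-and-validate amplification keeps each run at constant confidence and therefore yields the stated $2^{O(k^5/\varepsilon^2)}\cdot(n/\delta)^{O(1)}$ bound with an absolute-constant exponent in $1/\delta$, at the cost of a routine Hoeffding validation step. Two cosmetic repairs to your write-up: budget the accuracies so they compose (run each repetition to excess error $\varepsilon/2$ and validate to $\varepsilon/4$, otherwise your selection argument only gives $3\varepsilon/2$), and do not claim an \emph{exact} minimizer from the convex programs --- the ellipsoid method gives an additive $\beta$-approximation, which (as the paper notes) suffices once $\beta$ is taken negligibly small.
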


\begin{theorem} \label{thm:learning-reliable}
For any $n, k \in \mathbb{N}$, ReLU$(n, k)$ can be agnostically reliably properly learned for the squared loss function in time $2^{O(k^7/\varepsilon^4)} \cdot (n/\delta)^{O(1)}$ time.
\end{theorem}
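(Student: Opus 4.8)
The plan is to follow exactly the pipeline behind Theorem~\ref{thm:learning}, augmented by one extra idea tailored to reliability. Recall that the non-reliable learner works by (i) a standard generalization bound showing that $\text{poly}(k/\varepsilon)$ samples suffice for the empirical squared loss to be uniformly $\varepsilon$-close to $\cL(\cdot;\cD)$ over all of ReLU$(n,k)$, and (ii) the observation that the exact training algorithm of Arora et al.~\cite{arora2018understanding} runs in time $2^{O(km)}\cdot\text{poly}(n)$ on $m$ samples, by enumerating the $2^{O(km)}$ possible activation patterns (which unit is active on which sample) and solving, within each fixed pattern, a convex quadratic program in $(\bw^1,\dots,\bw^k,\bb)$ subject to the pattern-consistency and norm constraints. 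For the reliable model I would solve the same per-pattern convex programs but add, for every sample $(\bx_i,0)$ with label zero, the \emph{margin constraints} $\langle\bw^j,\bx_i\rangle+b_j\le-\gamma$ for every unit $j$, where $\gamma=\Theta(\varepsilon/k^2)$. Since $\alpha_1,\dots,\alpha_k>0$, this forces $h(\bx_i)=0$ on sample zeros, and the biases are precisely what make such a margin achievable; these are linear constraints, so each subproblem remains a convex QP solvable in $\text{poly}(n,m)$. The learner outputs the constrained empirical minimizer $\hat h$.

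For correctness I would first establish \emph{feasibility} of the constrained empirical problem against the optimal reliable comparator $c^*\in\argmin_{c\in\cC^+(\cD)}\cL(c;\cD)$. Because $c^*$ has zero false-positive probability under $\cD$, with probability $1$ every sample zero $\bx_i$ satisfies $p_j:=\langle\bw^j_{c^*},\bx_i\rangle+b_{j}\le 0$ for all $j$. I would then replace $c^*$ by the scale-and-shift concept $\tilde c$ with weights $(1-\gamma)\bw^j_{c^*}$ and biases $(1-\gamma)b_{j}-\gamma$; one checks $\|(1-\gamma)\bw^j_{c^*}\|\le 1$ and $(1-\gamma)b_{j}-\gamma\in[-1,1]$, so $\tilde c\in\text{ReLU}(n,k)$ (properness is preserved), while each pre-activation becomes $(1-\gamma)p_j-\gamma\le-\gamma$ on sample zeros. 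Hence $\tilde c$ is feasible. Since each unit's output changes by at most $\gamma(|p_j|+1)=O(\gamma)$, the function values move by $O(k\gamma)$ and, on the bounded range $[0,2k]$, the empirical squared loss increases by only $O(k^2\gamma)=O(\varepsilon)$; thus $\hat\cL(\hat h;\cdot)\le\hat\cL(\tilde c;\cdot)\le\hat\cL(c^*;\cdot)+O(\varepsilon)$.

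The generalization step has two parts. For the squared loss, the Rademacher complexity of ReLU$(n,k)$ on the unit ball is $O(k/\sqrt m)$ and the loss is $O(k)$-Lipschitz on $[0,2k]$, so uniform two-sided convergence at scale $\varepsilon$ needs $m\gtrsim k^4/\varepsilon^2$; combined with the previous paragraph this gives $\cL(\hat h;\cD)\le\inf_{c\in\cC^+(\cD)}\cL(c;\cD)+O(\varepsilon)$. For the false-positive bound I would analyze the ``max-activation'' function $g(\bx):=\max_j(\langle\bw^j,\bx\rangle+b_j)$, noting that a false positive occurs exactly when $g(\bx)>0$ and $y=0$, and use the $(1/\gamma)$-Lipschitz ramp $\psi_\gamma$ that is $0$ on $(-\infty,-\gamma]$ and $1$ on $[0,\infty)$, so that $\ind[g(\bx)>0]\le\psi_\gamma(g(\bx))$. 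On the sample the margin constraints force $g(\bx_i)\le-\gamma$, hence the empirical surrogate $\E_i[\psi_\gamma(g(\bx_i))\ind[y_i=0]]$ is zero; since $g$ is $1$-Lipschitz in $(\bw,b)$ and $\psi_\gamma$ is $(1/\gamma)$-Lipschitz, uniform convergence of this surrogate over the class needs $m\gtrsim k^2/(\gamma^2\varepsilon^2)$, which with $\gamma=\Theta(\varepsilon/k^2)$ yields $m=\Theta(k^6/\varepsilon^4)$ and gives $\cL_{=0}(\hat h;\cD)\le\E[\psi_\gamma(g)\ind[y=0]]\le\varepsilon$. The running time is then $2^{O(km)}\cdot(n/\delta)^{O(1)}=2^{O(k^7/\varepsilon^4)}\cdot(n/\delta)^{O(1)}$, as claimed.

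The main obstacle, and the reason the reliable case is genuinely harder than Theorem~\ref{thm:learning}, is that the false-positive loss is a $0/1$ quantity with no Lipschitz surrogate at the threshold, and—because $h$ is a sum of nonnegatively weighted ReLUs—one cannot enforce a margin in the \emph{output} value $h(\bx)\ge 0$ itself. The fix is to push the margin down to the level of the individual pre-activations (this is where the biases are essential) and to run the generalization argument through the single scalar $g=\max_j(\langle\bw^j,\cdot\rangle+b_j)$ rather than through $h$. Verifying that the scale-and-shift perturbation simultaneously (a) keeps the comparator inside ReLU$(n,k)$ with biases in $[-1,1]$, (b) produces a uniform $\gamma$-margin on all sample zeros, and (c) costs only $O(\varepsilon)$ in squared loss is the delicate point I expect to require the most care.
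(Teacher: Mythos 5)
Your proposal is correct in substance but follows a genuinely different route from the paper's proof. The paper solves the empirical problem subject to the \emph{hard} constraints $\sum_{j}[\langle \bw_j,\bx_i\rangle+b_j]_+=0$ on every sample with $y_i=0$ (the second part of Lemma~\ref{lem:exp-algo}), for which any comparator $c\in\cC^+(\cD)$ is feasible with probability $1$, and only afterwards shifts the biases to $b_j'=\max\{-1,b_j-\gamma\}$ with $\gamma=\varepsilon/(12k^2)$; the false-positive bound then applies Theorem~\ref{thm:gen-err} with the ramp loss $\ell_{\gamma\text{-cont}}$ to the \emph{unshifted} minimizer $h$ --- an element of ReLU$(n,k)$, whose Rademacher complexity is already known from the proof of Theorem~\ref{thm:learning} --- using that $h(\bx)\le\gamma$ forces $h_{\text{shifted}}(\bx)=0$ by positivity of the $\alpha_j$'s. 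You instead build the margin $\langle\bw^j,\bx_i\rangle+b_j\le-\gamma$ directly into the convex subproblems and output the constrained minimizer with no post-processing. This costs you two steps the paper avoids: the comparator $c^*$ is no longer feasible and must be replaced by your scale-and-shift concept $\tilde c$ (your verification that $\tilde c$ stays in the class, attains the margin, and moves the empirical loss by only $O(k^2\gamma)=O(\varepsilon)$ is correct), and the false-positive generalization must run over the class of max-pre-activations $\bx\mapsto\max_j(\langle\bw^j,\bx\rangle+b_j)$ rather than over ReLU$(n,k)$. In exchange, your learner needs no bias shift and the reliability requirement appears explicitly as linear constraints in each convex program.

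The one genuine gap is the Rademacher complexity of that max class. The assertion that ``$g$ is $1$-Lipschitz in $(\bw,b)$'' is not by itself a Rademacher bound, and the paper's tools (Theorem~\ref{thm:KST}, Fact~\ref{fact:subadd}, Theorem~\ref{thm:lips}) cover sums and scalar Lipschitz compositions, not pointwise maxima. Repairing it elementarily via $\max(f_1,f_2)=\frac{1}{2}\left(f_1+f_2+|f_1-f_2|\right)$ and Theorem~\ref{thm:lips}, iterated over $k$ units, loses a factor of roughly $k^{\log_2 3}$ in place of $k$, which pushes the runtime exponent above the stated $O(k^7/\varepsilon^4)$; a vector-valued contraction inequality (e.g.\ Maurer's) recovers $\cR_m = O(k/\sqrt{m})$ but lies outside the paper's toolkit. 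The cleanest fix within the paper's tools is to avoid the max altogether: bound $\ind[g(\bx)>0 \wedge y=0]\le\sum_{j\in[k]}\psi_\gamma(\langle\bw^j,\bx\rangle+b_j)\cdot\ind[y=0]$ and apply Theorem~\ref{thm:gen-err} once to the class of affine functions $\{\bx\mapsto\langle\bw,\bx\rangle+b : \|\bw\|_2\le 1, b\in[-1,1]\}$ with the $(1/\gamma)$-Lipschitz, $1$-bounded loss $\psi_\gamma(\cdot)\ind[y=0]$; each of the $k$ terms has zero empirical value by your margin constraints and deviation $O\bigl(1/(\gamma\sqrt{m})\bigr)+O\bigl(\sqrt{\log(1/\delta)/m}\bigr)$, so summing over $j$ still requires only $m=\Theta(k^6/\varepsilon^4)$ samples and yields the claimed $2^{O(k^7/\varepsilon^4)}\cdot(n/\delta)^{O(1)}$ running time.
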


Observe that both Theorems consider learning the sum of $k$ ReLUs, i.e., when $\alpha_1 = \cdots = \alpha_k = 1$. For Theorem~\ref{thm:learning}, the same result holds for arbitrary coefficients (with a similar proof). This theorem can be further generalized to the case where the coefficients $\alpha_1, \dots, \alpha_k$ are unknowns with only $2^k$ multiplicative overhead to the running time, by enumerating all $\alpha_1, \dots, \alpha_k \in \{\pm 1\}$. On the other hand, it is unclear how to extend the algorithm in Theorem~\ref{thm:learning-reliable} to work for negative coefficients; however, we note that it is not even clear whether ``reliable'' makes sense in this case, since the predicted values can take negative values.

Our results above should be compared to those of~\cite{goel2016reliably} who showed similar results, except that their algorithm is \emph{improper}: their output is a (``clipped'' of) low-degree polynomial, as opposed to sums of ReLUs (which our algorithm outputs).  While our algorithm is advantageous to theirs in this sense, theirs is faster\footnote{We do not attempt to optimize our running time, for the sake of simplicity. Nevertheless, it is clear that our approach cannot go beyond $2^{O(k^2/\varepsilon^2)} \cdot (n/\delta)^{O(1)}$ time, which is still slower than the algorithms of~\cite{goel2016reliably}.} and extends to a larger class of networks.

Our proof is simple. It first applies \emph{generalization bounds} (similar to \cite{goel2016reliably}) which implies that it suffices to take $(k/\varepsilon)^{O(1)}$ samples and solve (even approximately) the training problem on these samples.
Hence, by invoking the algorithm from Arora et al.'s work~\cite{arora2018understanding} (see Lemma~\ref{lem:exp-algo}), we immediately get Theorem~\ref{thm:learning}.

To ensure the reliability guarantee (Theorem~\ref{thm:learning-reliable}), we do not immediately output the minimizer $h$ from Arora et al.'s algorithm. Rather, we ``shift'' the biases by subtracting them with a small number. By doing so, for any $\bx$ such that $y = 0$ and $h(\bx)$ is non-zero but not too large, the modified hypothesis makes sure that $(\bx, y)$ is not a false positive (see~\eqref{eq:to-cont} below). This is a difference between our proof and the one used in~\cite{goel2016reliably} where all biases are assumed to be zero and hence they need to ``clip'' their hypothesis instead. This is also where we need the positivity of $\alpha_j$'s; if $\alpha_j$'s are allowed to be negative, it could be that $h(\bx)$ is small but it remains non-zero after bias shifts.

\subsection{Generalization Bounds}

Before we get to our proofs, we state the necessary generalization bounds; these are exactly the same as those used in~\cite{goel2016reliably}. (See Section 2.5 there.)

\begin{theorem}[\cite{BM02}] \label{thm:gen-err}
Let $\cD$ be a distribution over $\cX \times \cY$ and let $\ell: \cY \times \cY \to \mathbb{R}$ be a $b$-bounded loss function that is $L$-Lispschitz in its first argument. Let $\cF \subseteq (\cY')^{\cX}$ and for any $f \in \cF$, let $\cL(f; \cD) := \E_{(\bx, y) \sim \cD}[\ell(f(\bx), y)]$ and $\hcL(f; S) := \frac{1}{m} \sum_{i=1}^m \ell(f(\bx_i), y_i)$, where each sample $(\bx_i, y_i) \in S$ is drawn independently uniformly at random according to $\cD$. Then, for any $\delta > 0$, with probability at least $1 - \delta$, the following is true for all $f \in \cF$:
\begin{align*}
|\cL(f; \cD) - \hcL(f; S)| \leq 4L \cdot \cR_m(\cF) + 2b \sqrt{\frac{\log(1/\delta)}{m}}
\end{align*}
where $\cR_m(\cF)$ is the Rademacher complexity of $\cF$.
\end{theorem}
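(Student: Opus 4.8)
The plan is to prove this uniform-convergence statement by the standard three-step Rademacher recipe, where the two terms on the right-hand side arise from two distinct steps. Write $\Phi(S) := \sup_{f \in \cF} |\cL(f;\cD) - \hcL(f;S)|$; the goal is to show $\Phi(S)$ is small with high probability. First I would establish concentration of $\Phi(S)$ around its expectation via McDiarmid's bounded-difference inequality. Since $\ell$ is $b$-bounded, replacing a single sample $(\bx_i, y_i)$ changes $\hcL(f; S)$ by at most $O(b/m)$ uniformly in $f$, so $\Phi$ satisfies the bounded-difference condition with constants $c_i = O(b/m)$. McDiarmid then yields, with probability at least $1 - \delta$, that $\Phi(S) \le \E_S[\Phi(S)] + 2b\sqrt{\log(1/\delta)/m}$, which accounts for the second term of the bound.

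Second, I would control the expectation $\E_S[\Phi(S)]$ by a symmetrization argument. Introducing an independent ghost sample $S'$ drawn from the same distribution and using $\cL(f;\cD) = \E_{S'}[\hcL(f;S')]$, Jensen's inequality gives $\E_S[\Phi(S)] \le \E_{S,S'}\sup_f |\hcL(f;S') - \hcL(f;S)|$. Because $S$ and $S'$ are exchangeable, inserting i.i.d. Rademacher signs $\sigma_i \in \{\pm 1\}$ leaves the distribution of the difference unchanged, and splitting the two empirical averages by the triangle inequality yields $\E_S[\Phi(S)] \le 2\,\E_{S,\sigma}\sup_f \frac{1}{m}\big|\sum_i \sigma_i \ell(f(\bx_i), y_i)\big| = 2\,\cR_m(\ell \circ \cF)$, the Rademacher complexity of the loss-composed class.

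Third, I would strip off the loss using the Ledoux--Talagrand contraction (comparison) inequality. For each $i$, the map $\phi_i(t) := \ell(t, y_i)$ is $L$-Lipschitz in its first argument, so the contraction principle (in its absolute-value form) gives $\cR_m(\ell \circ \cF) \le 2L\,\cR_m(\cF)$. Chaining the three steps, $\E_S[\Phi(S)] \le 2\,\cR_m(\ell\circ\cF) \le 4L\,\cR_m(\cF)$, and combining with the McDiarmid bound produces $\Phi(S) \le 4L\,\cR_m(\cF) + 2b\sqrt{\log(1/\delta)/m}$ with probability $1-\delta$, as claimed.

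The main obstacle I anticipate is the contraction step together with the accompanying constant bookkeeping. The functions $\phi_i(t) = \ell(t, y_i)$ vary across samples because they depend on the individual labels $y_i$, so one must invoke the version of the contraction principle allowing a distinct $L$-Lipschitz map per coordinate rather than a single fixed nonlinearity; and it is precisely the factor of $2$ in the absolute-value contraction that upgrades the $2\,\cR_m(\ell\circ\cF)$ from symmetrization into the $4L\,\cR_m(\cF)$ appearing in the statement. Some care is likewise needed to carry the two-sided absolute value consistently through all three steps---either by applying symmetrization directly to the absolute-value functional, or by treating $\sup_f(\cL - \hcL)$ and $\sup_f(\hcL - \cL)$ separately and matching the resulting constants to the stated bound.
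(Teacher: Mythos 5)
The paper does not prove this statement at all: it is imported verbatim from \cite{BM02} (following Section 2.5 of \cite{goel2016reliably}), so there is no in-paper proof to compare against; the comparison is therefore against the standard literature argument. Your three-step outline---McDiarmid's bounded-difference inequality, symmetrization via a ghost sample, then contraction---is exactly that standard argument, and your first two steps are correct as stated, with constants to spare: bounded differences of order $b/m$ give a deviation term $b\sqrt{\log(1/\delta)/(2m)} \leq 2b\sqrt{\log(1/\delta)/m}$, and the ghost-sample/sign-insertion/triangle-inequality chain correctly yields $\E_S[\Phi(S)] \leq 2\,\cR_m(\ell \circ \cF)$ even with the absolute values carried throughout.

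The step that does not go through as written is the contraction. The absolute-value form of the Ledoux--Talagrand comparison inequality---the form this paper itself records as Theorem~\ref{thm:lips}, with its factor of $2$---requires not only that each map be $L$-Lipschitz but also that it \emph{vanish at zero}. Your per-coordinate maps $\phi_i(t) = \ell(t, y_i)$ need not satisfy $\phi_i(0) = 0$, and this is a separate issue from the per-coordinate generality you do flag. The classical repair is to center: apply contraction to $\tilde{\phi}_i(t) := \ell(t, y_i) - \ell(0, y_i)$, which is $L$-Lipschitz and vanishes at zero. The cost is an extra additive term $\frac{1}{m}\E_{\sigma}\bigl|\sum_i \sigma_i \ell(0, y_i)\bigr| \leq b/\sqrt{m}$, which survives into the final bound and is not absorbed by the stated right-hand side uniformly in $\delta$; this is precisely why \cite{BM02} states its risk bound in terms of the Rademacher complexity of the \emph{centered} loss class. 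The cleanest way to close your argument is instead to run it with the convention of Rademacher complexity \emph{without} absolute values: symmetrization still gives the factor $2$ (applied separately to $\sup_f(\cL - \hcL)$ and $\sup_f(\hcL - \cL)$ with a union bound over $\delta/2$ each), and Talagrand's contraction lemma in that convention holds per-coordinate with constant $L$ and \emph{no} condition at zero, yielding a bound of the form $2L \cdot \cR_m(\cF) + b\sqrt{\log(2/\delta)/(2m)}$ uniformly over $\cF$. Since the absolute-value Rademacher complexity dominates the one without absolute values, this implies the stated inequality under either convention (for any $\delta$ of interest). With that substitution your proof is complete; the strategy itself is sound and is the same one used in the cited source.
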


\begin{theorem}[\cite{KST08}] \label{thm:KST}
Let $\cX \subseteq \cB^n$ and $\cW = \{\bx \mapsto \left<\bx, \bw\right> \mid \|w\|_2 \leq 1\}$. Then, $\cR_m(\cW) \leq \sqrt{\frac{1}{m}}$.
\end{theorem}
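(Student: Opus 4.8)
The plan is to prove the bound directly from the definition of Rademacher complexity, exploiting the self-duality of the $\ell_2$-norm to evaluate the relevant supremum in closed form. Recall that, for a class $\cF$ of real-valued functions on $\cX$,
\[
\cR_m(\cF) = \E_{\bx_1, \dots, \bx_m}\, \E_{\sigma}\left[\sup_{f \in \cF} \frac{1}{m}\sum_{i=1}^m \sigma_i f(\bx_i)\right],
\]
where $\sigma_1, \dots, \sigma_m$ are i.i.d.\ Rademacher signs (uniform on $\{\pm 1\}$) and each $\bx_i \in \cX \subseteq \cB^n$. (Because $\cW$ is symmetric under $\bw \mapsto -\bw$, it is immaterial whether the supremum is taken with or without an absolute value.)

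First I would rewrite the inner supremum. Since each $f \in \cW$ has the form $\bx \mapsto \langle \bx, \bw\rangle$ with $\|\bw\|_2 \le 1$, we have $\sum_{i=1}^m \sigma_i f(\bx_i) = \langle \sum_{i=1}^m \sigma_i \bx_i,\, \bw\rangle$, and maximizing over the unit ball $\{\|\bw\|_2 \le 1\}$ gives, by Cauchy--Schwarz (with the maximizer $\bw$ pointing along $\sum_i \sigma_i \bx_i$), exactly $\|\sum_{i=1}^m \sigma_i \bx_i\|_2$. Hence
\[
\cR_m(\cW) = \frac{1}{m}\, \E_{\bx_1,\dots,\bx_m}\,\E_{\sigma}\left[\left\|\sum_{i=1}^m \sigma_i \bx_i\right\|_2\right].
\]

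Next I would bound the expected norm. By Jensen's inequality applied to the concave square-root function, $\E_\sigma[\|\sum_i \sigma_i \bx_i\|_2] \le \sqrt{\E_\sigma[\|\sum_i \sigma_i \bx_i\|_2^2]}$. Expanding the squared norm gives $\|\sum_i \sigma_i \bx_i\|_2^2 = \sum_{i,j} \sigma_i \sigma_j \langle \bx_i, \bx_j\rangle$; taking the expectation over the independent Rademacher signs, all off-diagonal terms vanish because $\E[\sigma_i \sigma_j] = 0$ for $i \ne j$, leaving $\sum_{i=1}^m \|\bx_i\|_2^2 \le m$, where the last step uses $\|\bx_i\|_2 \le 1$. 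Combining the three displays yields $\cR_m(\cW) \le \frac{1}{m}\sqrt{m} = \sqrt{1/m}$, as claimed; note that the bound $\sum_i \|\bx_i\|_2^2 \le m$ holds for every fixed sample, so the outer expectation over $\bx_1, \dots, \bx_m$ changes nothing.

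I do not expect any genuine obstacle here: this is the standard Rademacher bound for linear predictors and the argument is a short chain of (in)equalities. The only step that is an inequality rather than an identity is the Jensen bound, so the single point requiring care is its direction --- since $t \mapsto \sqrt{t}$ is concave, Jensen gives $\E[\sqrt{Z}] \le \sqrt{\E Z}$, which is exactly what is needed. One should also fix the normalization convention for $\cR_m$: with the factor $1/m$ placed inside the supremum as above one obtains precisely the stated $\sqrt{1/m}$, whereas a convention with $1/m$ outside the expectation or with a $1/\sqrt{m}$ scaling would alter the constant.
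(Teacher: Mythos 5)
Your proof is correct. Note that the paper does not prove this statement at all: it is imported as a black box from the cited reference \cite{KST08}, so there is no internal proof to compare against. What you have written is the standard, self-contained argument for the $\ell_2$ case: the self-duality step $\sup_{\|\bw\|_2 \le 1} \langle \sum_i \sigma_i \bx_i, \bw\rangle = \|\sum_i \sigma_i \bx_i\|_2$, then Jensen with the concave square root, then the vanishing of the off-diagonal terms $\E[\sigma_i\sigma_j] = 0$, giving $\frac{1}{m}\sqrt{m} = \sqrt{1/m}$. Each step is valid, and your two cautionary remarks (the direction of Jensen, and the normalization convention for $\cR_m$) are exactly the right places to be careful. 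For comparison, the proof in \cite{KST08} itself is more general: it bounds the Rademacher complexity of linear classes whose weight sets are level sets of an arbitrary strongly convex regularizer, via Fenchel duality, and the $\ell_2$ bound stated here is the special case where the regularizer is $\frac{1}{2}\|\bw\|_2^2$. Your direct argument buys simplicity and self-containedness at the cost of that generality, which is immaterial for the use made of the theorem in this paper (it is only ever invoked for the Euclidean unit ball, in the proofs of Theorems~\ref{thm:learning} and~\ref{thm:learning-reliable}).
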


\begin{fact} \label{fact:subadd}
Let $\cF_1, \cF_2 \subseteq \mathbb{R}^{\cX}$ and $\cF = \{f_1 + f_2 \mid f_1 \in \cF_1, f_2 \in \cF_2\}$. Then, $\cR_m(\cF) \leq \cR_m(\cF_1) + \cR_m(\cF_2)$.
\end{fact}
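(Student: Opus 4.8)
The plan is to unwind the definition of Rademacher complexity and exploit the fact that $\cF$ is the Minkowski sum of $\cF_1$ and $\cF_2$. Recall that, for a fixed sample $S = (\bx_1, \dots, \bx_m)$, the empirical Rademacher complexity is $\hat{\cR}_S(\cF) = \E_\sigma\!\left[\sup_{f \in \cF} \frac{1}{m} \sum_{i=1}^m \sigma_i f(\bx_i)\right]$, where $\sigma = (\sigma_1, \dots, \sigma_m)$ is a vector of independent Rademacher signs, and $\cR_m(\cF) = \E_S[\hat{\cR}_S(\cF)]$.

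First I would fix $S$ and $\sigma$ and observe that every $f \in \cF$ can be written as $f = f_1 + f_2$ with $f_1 \in \cF_1$ and $f_2 \in \cF_2$, so that $\frac{1}{m} \sum_{i=1}^m \sigma_i f(\bx_i) = \frac{1}{m} \sum_{i=1}^m \sigma_i f_1(\bx_i) + \frac{1}{m} \sum_{i=1}^m \sigma_i f_2(\bx_i)$. The crucial step is then to split the supremum over $\cF$ into a supremum over $\cF_1$ plus a supremum over $\cF_2$: since the two summands depend on the \emph{independent} choices $f_1 \in \cF_1$ and $f_2 \in \cF_2$, we get $\sup_{f \in \cF} \frac{1}{m} \sum_i \sigma_i f(\bx_i) = \sup_{f_1 \in \cF_1} \frac{1}{m} \sum_i \sigma_i f_1(\bx_i) + \sup_{f_2 \in \cF_2} \frac{1}{m} \sum_i \sigma_i f_2(\bx_i)$. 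Taking expectation over $\sigma$ and then over $S$, and using linearity of expectation, yields $\cR_m(\cF) = \cR_m(\cF_1) + \cR_m(\cF_2)$, which in particular gives the claimed inequality.

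The only point requiring any care --- and hence the closest thing to an obstacle --- is the decomposition of the supremum over the Minkowski sum. This is valid precisely because $\cF = \{f_1 + f_2 \mid f_1 \in \cF_1, f_2 \in \cF_2\}$ lets $f_1$ and $f_2$ range independently, so maximizing the sum is the same as maximizing each term separately; in fact this makes the relation an \emph{equality} rather than a mere inequality. I would also note in passing that one should assume $\cF_1, \cF_2$ nonempty so that the suprema are well-defined, but this is routine. No concentration argument or structural property of the specific ReLU classes is needed here; the fact is purely a consequence of the linearity of the inner Rademacher sum.
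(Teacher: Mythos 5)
Your proof is correct. Note, however, that the paper does not give any proof of Fact~\ref{fact:subadd}: it is stated as a known, folklore property of Rademacher complexity, so there is no in-paper argument to compare against. Your argument is the standard one: for nonempty classes, the supremum over the Minkowski sum splits exactly, $\sup_{f_1 \in \cF_1, f_2 \in \cF_2}\bigl(A(f_1) + B(f_2)\bigr) = \sup_{f_1 \in \cF_1} A(f_1) + \sup_{f_2 \in \cF_2} B(f_2)$, and linearity of expectation over $\sigma$ and $S$ finishes the job. One caveat worth flagging: your stronger parenthetical claim that the relation is in fact an \emph{equality} is tied to the definition of Rademacher complexity without absolute values inside the supremum (the convention you adopt, which matches~\cite{KST08}). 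Under the older convention of~\cite{BM02}, where one takes $\E_{S,\sigma}\bigl[\sup_{f} \bigl|\frac{1}{m}\sum_{i} \sigma_i f(\bx_i)\bigr|\bigr]$, the supremum no longer splits exactly, and one only obtains the inequality via $|a+b| \leq |a| + |b|$. Since the fact asserts only the inequality, which holds under either convention, this does not affect correctness, but the equality claim should be understood as definition-dependent.
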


\begin{theorem}[\cite{BM02,LT91}] \label{thm:lips}
Suppose that $\psi: \mathbb{R} \to \mathbb{R}$ is $L_{\psi}$-Lipschitz  and $\psi(0) = 0$. Let $\cY \subseteq \cR$.
For any $\cF \subseteq \cY^{\cX}$, it holds that $\cR_m(\{\psi \circ f \mid f \in \cF\}) \leq 2 \cdot L_\psi \cdot \cR_m(\cF)$.
\end{theorem}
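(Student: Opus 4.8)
The plan is to prove the stronger \emph{contraction inequality} of Ledoux and Talagrand, of which the stated bound (with its factor of $2$) is an immediate consequence. Recall that, writing a fresh i.i.d.\ sample as $x_1, \dots, x_m \in \cX$ and letting $\sigma_1, \dots, \sigma_m$ be independent Rademacher signs, $\cR_m(\cF) = \E\left[\sup_{f \in \cF} \frac{1}{m} \sum_{i=1}^m \sigma_i f(x_i)\right]$. Since the contraction I will establish holds for every fixed choice of $x_1, \dots, x_m$, it suffices to bound the inner quantity $\E_\sigma\left[\sup_{f \in \cF} \frac1m \sum_i \sigma_i \psi(f(x_i))\right]$ pointwise in the sample and then take the outer expectation; thus I would fix $x_1, \dots, x_m$ throughout and only average over $\sigma$.

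The heart of the argument is a \emph{peeling} step that replaces $\psi \circ f$ by $L_\psi \cdot f$ in one coordinate at a time. Fix a coordinate $i$ and condition on all the other signs $\sigma_j$, $j \ne i$, absorbing the corresponding terms into a function $A(f) := \frac1m \sum_{j \ne i} \sigma_j \psi(f(x_j))$ that does not involve coordinate $i$. The claim is that
\begin{align*}
\E_{\sigma_i}\left[\sup_{f}\left(\tfrac{\sigma_i}{m}\psi(f(x_i)) + A(f)\right)\right] \le \E_{\sigma_i}\left[\sup_{f}\left(\tfrac{L_\psi\,\sigma_i}{m} f(x_i) + A(f)\right)\right].
\end{align*}
To prove it, I would expand the left-hand expectation over the two equally likely signs of $\sigma_i$ and pick, for any $\eta > 0$, functions $f_1, f_2 \in \cF$ that come within $\eta$ of achieving the two suprema; the left-hand side is then at most $\frac{1}{2m}\bigl(\psi(f_1(x_i)) - \psi(f_2(x_i))\bigr) + \frac12\bigl(A(f_1)+A(f_2)\bigr) + \eta$. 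Now I invoke the Lipschitz property in the form $\psi(f_1(x_i)) - \psi(f_2(x_i)) \le L_\psi\,|f_1(x_i) - f_2(x_i)|$, write the absolute value as $\varepsilon'(f_1(x_i) - f_2(x_i))$ for the appropriate sign $\varepsilon' \in \{\pm 1\}$, and re-pair the $f_1$- and $f_2$-terms into two separate suprema; by symmetry of $\sigma_i$ this matches exactly the right-hand side, and letting $\eta \to 0$ proves the claim.

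Iterating the peeling step over $i = 1, \dots, m$ turns every $\psi(f(x_i))$ into $L_\psi f(x_i)$ without ever decreasing the Rademacher expectation, yielding $\E_\sigma[\sup_f \frac1m \sum_i \sigma_i \psi(f(x_i))] \le L_\psi \cdot \E_\sigma[\sup_f \frac1m \sum_i \sigma_i f(x_i)]$; taking the expectation over the sample gives $\cR_m(\{\psi \circ f \mid f \in \cF\}) \le L_\psi \cdot \cR_m(\cF)$, which is even sharper than the stated factor-$2$ bound. I expect the main obstacle to be the bookkeeping in the peeling step rather than any deep idea: one must handle the possible non-attainment of the suprema via the $\eta$-approximate maximizers $f_1, f_2$, and must track the sign $\varepsilon'$ carefully so that the re-pairing lands on the correct symmetric average. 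The hypothesis $\psi(0) = 0$ is what is needed in the cited references to quote the principle in the factor-$2$ form; the elementary peeling argument above does not actually require it, since subtracting the constant $\psi(0)$ from all coordinates changes the centered Rademacher average only by $\psi(0)\,\E_\sigma[\sum_i \sigma_i] = 0$.
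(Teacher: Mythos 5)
The paper does not prove this statement at all: Theorem~\ref{thm:lips} is quoted as a black box from the literature (it is the Ledoux--Talagrand contraction principle, attributed to the cited works of Bartlett--Mendelson and Ledoux--Talagrand), so your proposal is being compared against a citation rather than an argument. What you wrote is essentially the standard textbook proof of that principle, and it is correct under the ``sup of sum'' convention $\cR_m(\cF) = \E\left[\sup_{f \in \cF} \frac{1}{m}\sum_{i=1}^m \sigma_i f(x_i)\right]$: the one-coordinate peeling with $\eta$-approximate maximizers $f_1, f_2$, the sign $\varepsilon'$ extracted from the Lipschitz bound, and the re-pairing into two suprema is exactly the classical argument. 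One bookkeeping slip: when you iterate, the absorbed term $A(f)$ is not $\frac{1}{m}\sum_{j \ne i} \sigma_j \psi(f(x_j))$ as written, but a mixture of already-peeled terms $L_\psi \sigma_j f(x_j)$ and unpeeled terms $\sigma_j \psi(f(x_j))$; fortunately your proof of the peeling claim never uses the specific form of $A$, so the iteration goes through for arbitrary $A \colon \cF \to \mathbb{R}$ and this is harmless.

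The one substantive caveat concerns your closing claims that the factor improves to $1$ and that $\psi(0)=0$ is unnecessary. Both are true only for the sup-of-sum convention you adopted. Under the absolute-value convention used in Bartlett--Mendelson, namely $\cR_m(\cF) = \E\left[\sup_{f \in \cF} \left|\frac{1}{m}\sum_{i=1}^m \sigma_i f(x_i)\right|\right]$, your peeling argument does not apply as written, and the hypothesis $\psi(0)=0$ is genuinely necessary: a nonzero constant function $\psi \equiv c$ is $0$-Lipschitz, yet the class $\{\psi \circ f\}$ has strictly positive absolute-value Rademacher complexity, so the bound $2 L_\psi \cR_m(\cF) = 0$ would fail. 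The factor $2$ and the condition $\psi(0)=0$ in the paper's statement are precisely the cost of the absolute-value version. Since the paper never pins down its convention (its other imported bound, Theorem~\ref{thm:KST}, is stated in sup-of-sum form, which is consistent with how both results are combined in the proof of Theorem~\ref{thm:learning}), your proof does establish the inequality in the form the paper actually uses; just be aware that your ``sharper'' constants are convention-dependent rather than an improvement over the cited result.
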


\subsection{Arora et al.'s Training Algorithm} \label{sec:exp-algo}

Another ingredient is the algorithm of~\cite{arora2018understanding}, which runs in time $m^{O(kn)}$ and output the optimal training error (to within arbitrarily small accuracy). We observe that, for $m \ll n$, the running time becomes $2^{km} \cdot poly(n, m, k)$ which is even faster:

\begin{lemma} \label{lem:exp-algo} 
There is an $2^{km} \cdot poly(n, m,1/\beta,C)$-time algorithm that, given samples $\{(\bx_i, y_i)\}_{i \in [m]}$ where $\bx_i \in \mathbb{R}^n$ and an accuracy parameter $\beta \in (0,1)$, finds $\bw_1, \dots, \bw_k \in \cB^n$ and $\bb \in [-1, 1]^k$ that minimizes the squared training error
up to an additive error of $\beta$, where $C$ is the bit complexity of the numbers in the input.
Furthermore, there is an algorithm with the same running time that finds $\bw_1, \dots, \bw_k \in \cB^n, \bb \in [-1, 1]^k$ that minimizes the squared training error to within $\beta$ additive error subjects to additional constraints that $\sum_{j \in [k]} [\left<\bw_j, \bx_j\right> + b_j]_+ = 0$ for all $i \in [m]$ such that $y_i = 0$.
\end{lemma}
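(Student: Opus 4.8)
The plan is to follow the activation-pattern idea underlying Arora et al.'s algorithm, but to bound the number of patterns by the trivial count $2^{km}$ (which improves on their hyperplane-arrangement count $m^{O(kn)}$ precisely in the regime $m \ll n$) and to solve each resulting subproblem by convex programming. The key observation is that, once we fix for every pair $(i,j) \in [m] \times [k]$ whether the $j$-th unit is \emph{active} on $\bx_i$ (meaning $\langle \bw_j, \bx_i \rangle + b_j \geq 0$) or \emph{inactive} (meaning $\langle \bw_j, \bx_i \rangle + b_j \leq 0$), the behaviour of the network on all samples becomes a \emph{linear} function of the unknowns $(\bw_1, \dots, \bw_k, b_1, \dots, b_k)$: on sample $i$ the prediction equals $\sum_{j \,:\, (i,j) \text{ active}} (\langle \bw_j, \bx_i \rangle + b_j)$, since each inactive unit contributes $0$.

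First I would enumerate all $2^{km}$ activation patterns $\sigma \in \{\text{active}, \text{inactive}\}^{[m] \times [k]}$. For a fixed $\sigma$, set up the optimization problem whose objective is the squared training error $\sum_{i=1}^m \big(\sum_{j:\,\sigma_{i,j} = \text{active}} (\langle \bw_j, \bx_i \rangle + b_j) - y_i\big)^2$, a convex quadratic in the unknowns. The constraints are the activation inequalities $\langle \bw_j, \bx_i \rangle + b_j \geq 0$ (when $\sigma_{i,j}$ is active) or $\leq 0$ (when inactive), the bias box $b_j \in [-1,1]$ (all linear), and the norm bounds $\|\bw_j\|_2 \leq 1$ (a convex second-order-cone constraint). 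Hence each subproblem is a convex program over a bounded feasible region, and a standard convex solver (e.g.\ the ellipsoid method) finds a point within additive error $\beta$ of its optimum in time $\poly(n, m, \log(1/\beta), C)$, where $C$ bounds the bit complexity of the input numbers entering the quadratic coefficients. Taking the best solution over all $2^{km}$ patterns yields the stated running time $2^{km} \cdot \poly(n, m, 1/\beta, C)$; patterns whose feasible region is empty are simply discarded, emptiness being detected by the same convex feasibility routine.

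For correctness, let $(\bw^*, \bb^*)$ be any minimizer of the training problem and let $\sigma^*$ be the pattern it induces, breaking ties arbitrarily (when $\langle \bw_j^*, \bx_i \rangle + b_j^* = 0$ both labels are feasible and give the same contribution $0$). Then $(\bw^*, \bb^*)$ is feasible for the subproblem indexed by $\sigma^*$ and attains there the true minimum training error, so the minimum over all subproblems equals the global optimum and the $\beta$-approximate solver returns a point within $\beta$ of it. For the \emph{furthermore}, I would additionally impose, for each sample $i$ with $y_i = 0$, the linear constraints $\langle \bw_j, \bx_i \rangle + b_j \leq 0$ for all $j \in [k]$; these force $\sum_{j \in [k]} [\langle \bw_j, \bx_i \rangle + b_j]_+ = 0$ exactly while keeping every subproblem convex (equivalently, one fixes the activation bits of these samples to \emph{inactive} in the enumeration). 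The same argument, restricted to feasible minimizers of the constrained problem, gives correctness.

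I do not expect a genuine obstacle here, since the heart of the argument is the linearization observation above. The only point that warrants care is the running-time guarantee of the convex solver: one must note that additive accuracy $\beta$ over a feasible region of diameter $O(1)$, with objective and constraint coefficients of bit complexity $O(C)$, is attainable in time polynomial in $n$, $m$, $C$, and $\log(1/\beta)$ (hence in particular in $\poly(1/\beta)$), which follows from standard guarantees for convex quadratic programming with a convex feasible set.
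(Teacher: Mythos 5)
Your proposal is correct and follows essentially the same route as the paper's own proof: enumerate the $2^{km}$ activation (active/inactive) guesses, solve each resulting convex program (quadratic objective, linear activation and bias constraints, $\ell_2$-ball constraints) via the ellipsoid method, and take the best feasible solution. The only cosmetic difference is in the \emph{furthermore} part, where you fix all units inactive on samples with $y_i = 0$ while the paper substitutes the guessed linear forms into the zero-sum constraint; both are valid, and yours is arguably cleaner since any solution with $\sum_{j}[\langle \bw_j,\bx_i\rangle + b_j]_+ = 0$ is feasible under the all-inactive labeling.
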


Since the result stated here is slightly different than the version in~\cite{arora2018understanding}, we sketch its proof in Appendix~\ref{sec:exp-algo}.


\subsection{Properly Learning ReLUs}

We now proceed to prove Theorem~\ref{thm:learning}. When we invoke the algorithm from Lemma~\ref{lem:exp-algo}, we will ignore the accuracy parameter $\beta$ and pretend that the algorithm output an actual optimal solution. This is with out loss of generality as in the applications below we can always set $\beta$ sufficiently small such that it becomes negligible. We only choose to ignore it because the proof is much cleaner this way.

\begin{proof}[Proof of Theorem~\ref{thm:learning}]
First, let us describe the algorithm. Given samples $S = \{(\bx_i, y_i)\}_{i \in [m]}$ where\footnote{If there are more than $m$ samples, just consider $m$ of them.}
\begin{align*}
m = \left\lceil \frac{10^{10} \cdot k^4 \cdot (1 + \log(1/\delta))}{\varepsilon^2} \right\rceil,
\end{align*}
we use the algorithm in Lemma~\ref{lem:exp-algo} to solve for $\bw_1, \dots, \bw_k, b_1, \dots, b_k$ that minimizes the training error. Then, output the hypothesis $h = \relu_{\bw_1, \dots, \bw_k}^{b_1, \dots, b_k}$.

Clearly, the algorithm is a proper learning algorithm (i.e. $h \in$ ReLU$_{\alpha_1, \dots, \alpha_k}(n, k)$). Furthermore, it runs in time $2^{km} poly(n, m) = 2^{O(k^5/\varepsilon^2)} poly(n, 1/\delta)$.

Thus, we are left to bound the error $\cL(h; \cD)$. Observe that, from Theorems~\ref{thm:KST} and~\ref{thm:lips}, we have $\cR_m(\text{ReLU}(n, 1)) \leq \frac{2}{\sqrt{m}}$. Hence, from Fact~\ref{fact:subadd}, we have $\cR_m(\text{ReLU}(n, k)) \leq \frac{2k}{\sqrt{m}}$. Since the squared loss function is $(4k)$-Lipschitz and $(4k^2)$-bounded in $[0, 2k]^2$, Theorem~\ref{thm:gen-err} implies that the following holds for all $f \in \text{ReLU}(n, k)$ with probability at least $1 - \delta$:
\begin{align} \label{eq:gen-err-simple}
|\cL(f; \cD) - \hcL(f; S)|
\leq \frac{\varepsilon}{2}.
\end{align}

For any $c \in \text{ReLU}(n, k)$, since $h$ minimizes the training error,
\begin{align} \label{eq:opt-simple}
\hcL(h; S) \leq \hcL(c; S).
\end{align}

As a result, we have
\begin{align*}
\cL(h; S) \stackrel{\eqref{eq:gen-err-simple}}{\leq} \hcL(h; S) + \frac{\varepsilon}{2} \stackrel{\eqref{eq:opt-simple}}{\leq} \hcL(c; S) + \frac{\varepsilon}{2} \stackrel{\eqref{eq:gen-err-simple}}{\leq} \cL(c; S) + \varepsilon 
\end{align*}
which concludes the proof.
\end{proof}

\subsection{Properly Reliably Learning ReLUs}\label{sub:reliable}

\begin{proof}[Proof of Theorem~\ref{thm:learning-reliable}]
Again, we start with our algorithm. Given samples $S = \{(\bx_i, y_i)\}_{i \in [m]}$ where
\begin{align*}
m = \left\lceil \frac{10^{10} \cdot k^6 \cdot \log(2/\delta)}{\varepsilon^4} \right\rceil.
\end{align*}
We use the algorithm from Lemma~\ref{lem:exp-algo} to solve for $\bw_1, \dots, \bw_k, b_1, \dots, b_k$ that minimizes the training error for the $m$ samples subject to the additional constraints that, for every sample $\bx_i$ with $y_i = 0$, we have $\sum_{j=1}^k [\left<\bw_j, \bx_i\right> + b_j]_+ = 0$. Then, let $b'_j= \max\{-1, b_j - \gamma\}$ for all $j = 1, \dots, k$ where $\gamma = \frac{\varepsilon}{12k^2}$ and output the hypothesis $h_{\text{shifted}} = \relu_{\bw_1, \dots, \bw_k}^{b'_1, \dots, b'_k}$.

This is clearly a proper learning algorithm and runs in $2^{km} poly(nm) = 2^{O(k^7/\varepsilon^4)} poly(n/\delta)$ time.

Thus, we are left to bound the loss. To do so, first recall (from the proof of Theorem~\ref{thm:learning}) that $\cR_m(\text{ReLU}(n, k)) \leq \frac{2k}{\sqrt{m}}$. Recall also that, for reliable learning, we need to bound two losses:
\begin{align*}
\cL_{=0}(h_{\text{shifted}}; \cD) &= \Pr_{(\bx, y) \sim \cD}[h_{\text{shifted}}(\bx) \ne 0 \wedge y = 0] \\
\cL(h_{\text{shifted}}; \cD) &= \E_{(\bx, y) \sim \cD}\left[(h_{\text{shifted}}(\bx) - y)^2\right].
\end{align*}

For convenience, let $h =  \relu_{\bw_1, \dots, \bw_k}^{b_1, \dots, b_k}$ be the minimizer before bias shifts.

\paragraph{Bounding $\cL_{=0}$.}  Define another loss function $\ell_{\gamma\text{-cont}}$ where $\ell_{\gamma\text{-cont}}(y', y) = 0$ for all $y \ne 0$ and
\begin{align*}
\ell_{\gamma\text{-cont}}(y', 0)
&=
\begin{cases}
0 & \text{ if } y' \leq 0 \\
y'/\gamma & \text{ if } y' \in (0, \gamma) \\
1 & \text{ if } y' \geq \gamma
\end{cases}
\end{align*}
Since $\ell_{\gamma\text{-cont}}$ is $(1/\gamma)$-Lipschitz and $1$-bounded on $[0, 2k]^2$, Theorem~\ref{thm:gen-err} implies that the following holds for all $f \in \text{ReLU}(n, k)$ with probability at least $1 - \delta/2$:
\begin{align} \label{eq:gen-err-cont}
|\cL_{\gamma\text{-cont}}(f; \cD) - \hcL_{\gamma\text{-cont}}(f; S)|
\leq \varepsilon.
\end{align}

Observe that, if $h(\bx) \leq \gamma$, then the bias shifts ensure that $h_{\text{shifted}}(\bx) = 0$; this is because $h(\bx) \leq \gamma$ implies that $\left<\bw_j, \bx\right> + b_j \leq \gamma$ for all $j \in [k]$, which means that $\left<\bw_j, \bx\right> + b'_j \leq 0$. (Note that this is the place where we need positivity of $\alpha_j$'s.)
As a result, we have
\begin{align} \label{eq:to-cont}
\cL_{=0}(h_{\text{shifted}}; \cD) \leq \cL_{\gamma\text{-cont}}(h; \cD).
\end{align}

Combining~\eqref{eq:to-cont} and~\eqref{eq:gen-err-cont}, we can conclude that the following holds with probability $1 - \delta/2$:
\begin{align} \label{eq:realizable-bound}
\cL_{=0}(h_{\text{shifted}}; \cD) \stackrel{\eqref{eq:to-cont}}{\leq} \cL_{\gamma\text{-cont}}(h; \cD) \stackrel{\eqref{eq:gen-err-cont}}{\leq} \varepsilon, 
\end{align}
where the last inequality also comes from the fact that $h(\bx_i) = 0$ for all $i$ with $y_i = 0$, i.e., $\hcL_{\gamma\text{-cont}}(h; S) = 0$.

\paragraph{Bounding $\cL$.} Notice that, for any $\bx \in \cB^n$, $|h(\bx) - h_{\text{shifted}}(\bx)| \leq k \cdot \gamma$. Since the squared loss function is $(4k)$-Lipschitz in the domain $[0, 2k]^2$, it holds that
\begin{align} \label{eq:shift-error}
|\cL(h; \cD) - \cL(h_{\text{shifted}}; \cD)| \leq (4k) \cdot k \cdot \gamma = \varepsilon / 3.
\end{align}

Since $\ell$ is $(4k)$-Lipschitz and $(4k^2)$-bounded on $[0, 2k]^2$, Theorem~\ref{thm:gen-err} implies that, with probability $1 - \delta/2$, the following holds for all $f \in \text{ReLU}(n, k)$:
\begin{align} \label{eq:gen-err}
|\cL(f; \cD) - \hcL(f; S)|
\leq \frac{\varepsilon}{3},
\end{align}

Finally, let $c$ be any function in ReLU$(n, k)$ such that $c(\bx) = 0$ for all $(\bx, y)$ in the support of $\cD$ such that $y = 0$. From how $h$ is computed, we must have
\begin{align} \label{eq:opt}
\hcL(h; S) \leq \hcL(c; S)
\end{align}

By combining the above bounds, the following holds with probability $1 - \delta/2$:
\begin{align*}
\cL(h_{\text{shifted}}; \cD)
&\stackrel{\eqref{eq:shift-error}}{\leq} \varepsilon/3 + \cL(h; \cD) \\
&\stackrel{\eqref{eq:gen-err}}{\leq} 2\varepsilon/3 + \hcL(h; S) \\
&\stackrel{\eqref{eq:opt}}{\leq} 2\varepsilon/3 + \hcL(c; S) \\
&\stackrel{\eqref{eq:gen-err}}{\leq} \varepsilon + \cL(c; \cD),
\end{align*}
which, together with~\eqref{eq:realizable-bound}, completes our proof.
\end{proof}

\section*{Acknowledgments}
We are indebted to Adam Klivans for useful comments on an preliminary version of this work and for his suggestion to study the bounded norm case.
We thank Amir Globerson and Amit Daniely for helpful discussions. We thank an anonymous reviewer for pointing out an error in a previous version of this work. 
\bibliographystyle{alpha}
\bibliography{relu}

\appendix

\section{Dealing with Biases In NP-hardness Proofs}
\label{app:bias}

As stated earlier, the proofs for NP-hardness results in the main body of the paper assumes that the biases $b_1, \cdots, b_k$ are all zeros. However, all NP-hardness results apply even for unknown $b_1, \dots, b_k$, with little to no change. We elaborate on this below.

\subsection{NP-hardness of Training a Single ReLU}

For Theorem~\ref{thm:single}, the same reduction establishes NP-hardness result when there is a bias variable $b_1$ in the ReLU. In the YES case, we can simply set $b_1$ to $0$. In the NO case, we can get an assignment with the same squared error and no bias by replacing $w_{\epsilon}$ by $w_{\epsilon}-b_1$ and $w_1$ by $w_1-b_1$, and thereafter use the same arguments as in the proof of Theorem~\ref{thm:single}.

\subsection{NP-hardness of Training Two ReLUs}

For Theorem~\ref{thm:two_rel}, we need to add two dummy variables $v^1_{\text{dummy}}$ and $v^2_{\text{dummy}}$, and add the following constraints
\begin{align*}
[b_1]_+ + [b_2]_+ &= 0 \\
[v^1_{\text{dummy}} + b_1]_+ + [v^2_{\text{dummy}} + b_2]_+ &= 1, \\ [-v^1_{\text{dummy}} + b_1]_+ + [-v^2_{\text{dummy}} + b_2]_+ &= 1, \\ [2v^1_{\text{dummy}} + b_1]_+ + [2v^2_{\text{dummy}} + b_2]_+ &= 2, \\ [-2v^1_{\text{dummy}} + b_1]_+ + [-2v^2_{\text{dummy}} + b_2]_+ &= 2.
\end{align*}
The YES case proceeds the same as before, by additionally setting $v^1_{\text{dummy}} = 1, v^2_{\text{dummy}} = -1$ and $b_1 = b_2 = 0$. In the NO case, these constraints force $b_1$ and $b_2$ to both be zero. The rest of the proof remains unchanged.

\subsection{NP-hardness of Approximating Training Error of a Single ReLU}

For Theorem~\ref{thm:single-inapprox}, we add a dummy variable $v_{\text{dummy}}$, and add the following constraints:
\begin{align*}
[b_1]_+ &= 0, \\
[v_{\text{dummy}} + b_1]_+ &= 1,\\
[2v_{\text{dummy}} + b_1]_+ &= 2.
\end{align*}
Again, it is simple to see that the minimum training error is at most is at most $\opt_{\mmcs}(C) \cdot \varepsilon^2$, by additionally setting $v_{\text{dummy}} = 1$ and $b_1 = 0$.

The other direction of the proof (i.e. that the minimum training error is at least $\opt_{\mmcs}(C) \cdot \varepsilon^2$) is more delicate. First, one needs to observe that $|b_1|$ cannot be more than $2\sqrt{\delta}$; otherwise, one of the three additional constraints must contribute to more than $\delta$ to the training error. Then, we can once again use induction as before to prove a statement similar to Proposition~\ref{prop:height-bound}, except that the bound will now be $(2|C|)^h \cdot (\varepsilon + 3\sqrt{\delta})$. The rest of the proof proceeds as before. Once again, we will be able to conclude that $\phi$ assign less than $\opt_{\mmcs}(C)$ input wires to \textsc{True} but satisfies the circuit, which is a contradiction.

\section{Training a Single ReLU in the Realizable Case}
\label{sec:single_rel}

Here we demonstrate that training a single ReLU in the realizable case can be done in polynomial time using linear programming. The key observation is the following.
\begin{lemma}\label{lem:LP}
Consider a system $\Lambda$ of $m$ equalities of the form $[\langle{\bv_i},{\bx}\rangle]_+ = c_i$ where $\bv_i$ are fixed $n$-dimensional vectors and $\bx$ is an $n$-dimensional vector composed of the variables
$x_1, \ldots, x_n$. Then there is a polynomial time algorithm in $n,m$ and the binary representation of the numbers in $v_i,c_i$ to determine if $\Lambda$ is feasible, and, in the feasible case, output an assignment to the $x_i$'s satisfying all equalities in $\Lambda$.
\end{lemma}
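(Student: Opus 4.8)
The plan is to observe that, although $[\langle \bv_i, \bx\rangle]_+ = c_i$ is not a linear constraint, the \emph{value} $c_i$ on the right-hand side already tells us on which side of its breakpoint the rectifier sits, so each constraint collapses to a purely linear (in)equality with no branching. Concretely, I would argue the following equivalence for each $i$: since $[z]_+ \ge 0$ always, if $c_i < 0$ the constraint is unsatisfiable and we immediately report infeasibility; if $c_i > 0$, then $[z]_+ = c_i$ forces $z > 0$, so the constraint is equivalent to the linear equality $\langle \bv_i, \bx\rangle = c_i$; and if $c_i = 0$, then $[z]_+ = 0$ holds exactly when $z \le 0$, so the constraint is equivalent to the linear inequality $\langle \bv_i, \bx\rangle \le 0$. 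The crucial point is that we never have to guess the activation pattern of the ReLU — the right-hand side determines it — which is exactly what keeps the procedure polynomial rather than exponential in $m$.

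The steps, in order, are then: (i) scan the $c_i$ and return ``infeasible'' if any $c_i < 0$; (ii) form the linear system consisting of the equality $\langle \bv_i, \bx\rangle = c_i$ for every $i$ with $c_i > 0$ together with the inequality $\langle \bv_i, \bx\rangle \le 0$ for every $i$ with $c_i = 0$; (iii) decide feasibility of this system of linear equalities and inequalities, and, if feasible, extract a witness $\bx$, using a polynomial-time linear-programming feasibility routine (e.g.\ the ellipsoid method), which runs in time polynomial in $n$, $m$, and the bit complexity of the $\bv_i, c_i$; (iv) output the resulting $\bx$.

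For correctness I would verify the two directions of the equivalence between $\Lambda$ and the linear system $\Lambda'$ built in step (ii): any $\bx$ feasible for $\Lambda'$ satisfies $[\langle \bv_i, \bx\rangle]_+ = c_i$ for every $i$ (checking the $c_i>0$ and $c_i=0$ cases separately using the characterization above), and conversely any solution of $\Lambda$ satisfies all constraints of $\Lambda'$; hence $\Lambda$ is feasible iff $\Lambda'$ is, and the assignment returned by the LP routine satisfies $\Lambda$. The running-time claim follows since step (iii) dominates and LP feasibility is solvable in time polynomial in the input size.

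There is essentially no hard obstacle here; the only point that deserves care is the case $c_i = 0$, where the constraint becomes an inequality rather than an equality — this is why the reduction lands on a mixed equality/inequality linear feasibility problem (an LP feasibility instance) rather than a pure linear system, and why it is important to invoke an LP solver rather than Gaussian elimination. The whole content of the lemma is precisely that the right-hand-side values obviate any enumeration over the $2^m$ possible activation patterns.
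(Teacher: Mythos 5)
Your proposal is correct and follows essentially the same route as the paper's proof: the case analysis on the sign of $c_i$ (infeasible if $c_i<0$, linear equality if $c_i>0$, linear inequality $\langle \bv_i,\bx\rangle \le 0$ if $c_i=0$), followed by a polynomial-time LP feasibility call, is exactly the paper's argument. The only difference is presentational --- you spell out the two directions of the equivalence explicitly, which the paper leaves implicit.
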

\begin{proof}
We show how to transform each equality to a linear equality or inequality. Consider $[\langle{\bv_i},{\bx}\rangle]_+= c_i$. If $c_i<0$ then the inequality is not satisfied by any assignment and $\Lambda $ has no solution. If $c_i > 0$ then replace the equality by $\langle{\bv_i},{\bx}\rangle = c_i$. If $c_i=0$ then replace the equality by $\langle{\bv_i},{\bx}\rangle \leq 0$. Since transforming the equalities to linear (in)equalities can be done in polynomial time and as we can decide whether a system of linear inequalities over the reals is satisfiable in polynomial time using linear programing, the claimed statement follows.
\end{proof}

Recall a training sample $\{(\bx_i, y_i)\}_{i \in [m]}$ of a single ReLU is called \emph{realizable} if there exists a choice of weights $w_i,i \in [n]$ and a bias $b$
such that $[\langle {\bx_i},{\bw}\rangle+b]_+=y_i$ for all $i \in [m]$. Hence, the above lemma immediately implies that the training problem for a single ReLU can be solved in polynomial time for realizable samples.

\section{Missing proof of Proposition~\ref{prop:height-bound}}
\label{app:induction}


\begin{proof}[Proof of Proposition~\ref{prop:height-bound}]

Recall that we have the following constraints in our training sample:

{\bf Dummy Variable Constraint.} We add the following constraint
\begin{align} \label{eq:dummy-eps}
[w_\epsilon]_+ = \epsilon.
\end{align}



{\bf OR Gate Constraint.} For each OR gate with input wires $i_1, \dots, i_k$ and output wire $j$, we add the constraint
\begin{align} \label{eq:or-gate}
[w_j - w_{i_1} - \cdots - w_{i_k}]_+ = 0.
\end{align}

{\bf AND Gate Constraint.} For each AND gate with input wires $i_1, \dots, i_k$ and output wire $j$, we add the following $k$ constraints:
\begin{align} \label{eq:and-gate}
[w_j - w_{i_1}]_+ = 0, \cdots, [w_j - w_{i_k}]_+ = 0.
\end{align}

We will prove by induction on the height $h$.

{\bf Base Case.} Consider any input wire $i$ (of height 0) that is assigned \textsc{False} by $\phi$. By definition of $\phi$, we have $w_i < w_\epsilon$. Note that $w_\epsilon$ must be at most $\varepsilon + \sqrt{\delta}$, as otherwise the squared error incurred in~\eqref{eq:dummy-eps} is already more than $\delta$. Thus, we have $w_i \leq \varepsilon + \sqrt{\delta}$ as claimed.

{\bf Inductive Step.} Let $h \in \mathbb{N}$ and suppose that the statement holds for every \textsc{False} wire at height less than $h$. Let $j$ be any \textsc{False} at height $h$. Let us consider two cases:
\begin{itemize}
\item $j$ is an output of an OR gate. Let $i_1, \dots, i_k$ be the inputs of the gate. Since $j$ is evaluated to \textsc{False}, $i_1, \dots, i_k$ must all be evaluated to \textsc{False}. From our inductive hypothesis, we have $w_{i_1}, \dots, w_{i_k} \leq (2|C|)^{h - 1} \cdot (\varepsilon + \sqrt{\delta})$. Now, observe that $w_j$ can be at most $\sqrt{\delta} + w_{i_1} + \cdots + w_{i_k}$, as otherwise the squared error incurred in~\eqref{eq:or-gate} would be more than $\delta$. As a result, we have
\begin{align*}
w_j &\leq \sqrt{\delta} + k \cdot (2|C|)^{h - 1} \cdot (\varepsilon + \sqrt{\delta}) \\
&= \sqrt{\delta} + |C| \cdot (2|C|)^{h - 1} \cdot (\varepsilon + \sqrt{\delta}) \\
&\leq (2|C|)^h \cdot(\varepsilon + \sqrt{\delta}).
\end{align*}
\item $j$ is an output of an AND gate. Let $i_1, \dots, i_k$ be the inputs of the gate. Since $j$ is evaluated to \textsc{False}, at least one of $i_1, \dots, i_k$ must all be evaluated to \textsc{False}. Let $i$ be one such wire. Observe that $w_j$ can be at most $\sqrt{\delta} + w_i$, as otherwise the squared error incurred in~\eqref{eq:and-gate} would be more than $\delta$. Hence, we have
\begin{align*}
w_j &\leq \sqrt{\delta} + w_i \\
&\leq \sqrt{\delta} + (2|C|)^{h - 1} \cdot (\varepsilon + \sqrt{\delta}) \\
&\leq (2|C|)^h \cdot (\varepsilon + \sqrt{\delta}).
\end{align*}
where the second inequality comes from the inductive hypothesis.
\end{itemize}
In both cases, we have $w_j < (2|C|)^h  \cdot (\varepsilon + \sqrt{\delta})$, which concludes the proof of Proposition~\ref{prop:height-bound}.
\end{proof}

\section{The Running Time of Arora et al.'s Algorithm} \label{sec:exp-algo}

\cite{arora2018understanding} gives a simple algorithm that runs in time $n^{O(km)}$ and outputs the optimal training error (to within arbitrarily small accuracy). Below, we observe that their algorithm also yields an $2^{km} \cdot poly(n, m, k)$ time algorithm; we use this running time guarantee for agnostically learning depth-2 networks of ReLUs. Before we proceed to the statement and the proof of the algorithm, we remark that, our NP-hardness proof for 2 ReLUs in fact also implies that, assuming the Exponential Time Hypothesis (ETH)~\cite{IP01,IPZ01}\footnote{ETH states that 3SAT with $n$ variables and $m = O(n)$ clauses cannot be solved in $2^{o(n)}$ time.}, the training problem for 2 ReLUs cannot be done in $2^{o(m)}$ time. Hence, the dependency $m$ in the exponent is tight in this sense.

\begin{lemma} \label{lem:exp-algo-app}
There is an $2^{km} \cdot poly(n, m,1/\delta,C)$-time algorithm that, given samples $\{(\bx_i, y_i)\}_{i \in [m]}$ where $\bx_i \in \mathbb{R}^n$ and an accuracy parameter $\delta \in (0,1)$, finds the weights $\bw_1, \dots, \bw_k \in \cB^n$ and biases $b_1, \dots, b_k \in [-1, 1]$ that minimizes the function $$\sum_{i \in [m]} \left(y_i - \sum_{j \in [k]} [\left<\bw_j, \bx_i\right> + b_j]_+\right)^2$$ up to an additive error of $\delta$. We assume the bit complexity of every number appearing  in the coordinates of the $x_i$'s and $y_i$'s is at most $C$. Furthermore, there is an algorithm with the same running time up to polynomial factors that finds $\bw_1, \dots, \bw_k \in \cB^n, b_1, \dots, b_k \in [-1, 1]$ subjects to an additional constraint that $\sum_{j \in [k]} [\left<\bw_j, \bx_j\right> + b_j]_+ = 0$ for all $i$ such that $y_i = 0$.

Moreover, these problems can be solved with similar running time even when we require additional constraints that $\bw_1, \dots, \bw_k \in \cB^n := \{\bw \in \mathbb{R}^n \mid \|\bw\|_2 \leq 1\}$ and $b_1, \dots, b_k \in [-1, 1]$.
\end{lemma}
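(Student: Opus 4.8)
The plan is to adapt the algorithm of Arora et al.~\cite{arora2018understanding} by exploiting the fact that the loss function depends on the weights only through the \emph{sign pattern} of the pre-activations $\left<\bw_j, \bx_i\right> + b_j$ across the $m$ samples. The key observation is that, once we fix for each unit $j \in [k]$ and each sample $i \in [m]$ whether $\left<\bw_j, \bx_i\right> + b_j$ is positive or not, the objective becomes a convex quadratic in the weights and biases, and the region of parameter space consistent with that sign pattern is a polytope (an intersection of half-spaces). There are at most $2^{km}$ such sign patterns in total, so we would enumerate over all of them, and for each solve the resulting convex program to optimality (up to additive error $\beta$).

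Concretely, first I would introduce, for each unit $j$ and sample $i$, a bit $s_{j,i} \in \{0, 1\}$ indicating whether the $j$-th ReLU is ``active'' on $\bx_i$. Fixing the vector $\bs = (s_{j,i})$ induces linear constraints: $\left<\bw_j, \bx_i\right> + b_j \geq 0$ when $s_{j,i} = 1$ and $\left<\bw_j, \bx_i\right> + b_j \leq 0$ when $s_{j,i} = 0$. Under these constraints, the prediction on $\bx_i$ equals $\sum_{j : s_{j,i} = 1} (\left<\bw_j, \bx_i\right> + b_j)$, which is linear in the parameters, so the squared loss $\sum_i (y_i - \text{prediction})^2$ is a convex quadratic. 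Minimizing this convex quadratic over the polytope defined by the sign constraints (together with the norm constraints $\|\bw_j\|_2 \leq 1$ and $b_j \in [-1, 1]$, which are themselves convex) is a convex program solvable to additive accuracy $\beta$ in time $\poly(n, m, 1/\beta, C)$ via, e.g., the ellipsoid method. Taking the minimum over all $2^{km}$ choices of $\bs$ yields the global optimum, for a total running time of $2^{km} \cdot \poly(n, m, 1/\beta, C)$.

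For the second (reliable) variant, I would simply add, for each $i$ with $y_i = 0$, the additional linear constraints forcing all $k$ pre-activations to be non-positive, i.e. $\left<\bw_j, \bx_i\right> + b_j \leq 0$ for all $j$; these guarantee $\sum_j [\left<\bw_j, \bx_i\right> + b_j]_+ = 0$ and are compatible with the polytope structure, so the same enumerate-then-convex-optimize strategy applies unchanged. One subtlety is that this extra constraint is consistent only with sign patterns having $s_{j,i} = 0$ for those samples, so those patterns can simply be skipped, which only speeds things up. The norm constraints in the final sentence of the lemma are handled identically, as they are convex.

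The main obstacle I anticipate is not conceptual but one of careful bookkeeping: verifying that on each fixed sign-pattern polytope the objective really is convex (so that the convex program is solvable efficiently) and that the ellipsoid/interior-point method indeed returns an $\beta$-additive approximate minimizer in polynomial time in $\log(1/\beta)$ and the bit complexity $C$. A secondary point to be careful about is boundary behavior: a sign pattern's polytope is defined by \emph{weak} inequalities, so patterns overlap on their boundaries, but since we take the minimum over all patterns and the true optimum lies in at least one polytope, correctness is unaffected. Since the lemma only claims an additive-$\beta$ guarantee and the statement is essentially a repackaging of the Arora et al.~argument, I would present this as a proof sketch rather than a fully detailed convex-optimization analysis.
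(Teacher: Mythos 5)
Your proposal is correct and follows essentially the same route as the paper: enumerate the $2^{km}$ activation (sign) patterns of the ReLUs on the samples, observe that each pattern yields a convex quadratically constrained program (linear sign constraints plus the convex norm/bias constraints) solvable to additive accuracy by the ellipsoid method, and take the minimum over all patterns, handling the $y_i = 0$ constraints in the reliable variant by adding the corresponding linear constraints. The only differences are cosmetic (your explicit treatment of boundary overlaps and skipping of incompatible patterns, which the paper leaves implicit).
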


\begin{proof}
For each ReLU term $[\left<\bw_j, \bx_i\right> + b_j]_+$ guess whether it equals $0$ or $\left<\bw_j, \bx_i\right> + b_j$ and replace the term in the error function accordingly.
Furthermore, if the guess $[\left<\bw_j, \bx_i\right> + b_j]_+=0$ was made then add the linear constraint $\left<\bw_j, \bx_i\right> + b_j\leq 0$. Else, add the
linear constraint $\left<\bw_j, \bx_i\right> + b_j\geq 0$. Finally add the constraints $-1 \leq b_i \leq 1, \|\bw_i\|_2 \leq 1$ for all $1 \leq i \leq k$.
After all guesses are made we get a convex quadratically constrained quadratic
program (QCQP). It is well known that such a convex optimization problem can be solved in time polynomial in
$n,m,1/\delta,C$ using a separation oracles and the ellipsoid algorithm (see for example, \cite{bubeck2015convex}, section 2.1). Since the number of guesses is at most $(2^m)^k$, the claim follows.
For the second part of the lemma, simply substitute the constraint $\sum_{j \in [k]} [\left<\bw_j, \bx_i\right> + b_j]_+ = 0$ according to the guesses made and add the resulting linear constraint. The claim follows.
\end{proof}

\end{document}